\newtheorem{thm}{Theorem}[section]
\newtheorem{lem}[thm]{Lemma}
\newtheorem{prop}[thm]{Proposition}
\theoremstyle{definition}
\newtheorem{defn}[thm]{Definition}
\newtheorem{ass}[thm]{Assumption}
\newtheorem{rem}[thm]{Remark}
\numberwithin{equation}{section}
\newcommand{\be}{\begin{equation}}
\newcommand{\ee}{\end{equation}}
\newcommand{\ba}{\begin{aligned}}
\newcommand{\ea}{\end{aligned}}
\newcommand{\cF}{\mathcal{F}}
\newcommand{\cB}{\mathcal{B}}
\newcommand{\bF}{\mathbb{F}}
\newcommand{\R}{\mathbb{R}}
\newcommand{\C}{\mathbb{C}}
\newcommand{\EE}{\mathbb{E}}
\newcommand{\cU}{\mathcal{U}}
\newcommand{\cD}{\mathcal{D}}
\newcommand{\cA}{\mathcal{A}}
\newcommand{\cI}{\mathcal{I}}
\newcommand{\cY}{\mathcal{Y}}
\newcommand{\im}{{\rm i}}
\newcommand{\real}{{\rm Re}}
\newcommand{\imag}{{\rm Im}}
\newcommand{\ud}{\mathrm{d}}
\newcommand{\PicF}{\pi^{c,\rm F}}
\newcommand{\PicB}{\pi^{c, \rm B}}
\newcommand{\PifF}{\pi^{f,\rm F}}
\newcommand{\PifB}{\pi^{f, \rm B}}
\newcommand{\PiF}{\Pi^{\rm F}}
\newcommand{\PiB}{\Pi^{\rm B}}
\title[Affine models for alternative risk-free rates]{Caplet pricing in affine models \\ for alternative risk-free rates}
\author[C. Fontana]{Claudio Fontana}
\address{Department of Mathematics ``Tullio Levi - Civita'', University of Padova, Italy.}
\email{fontana@math.unipd.it}
\date{\today}
\keywords{Risk-free rate; Libor reform; backward-looking rate; affine process; Fourier pricing.}
\thanks{{\em JEL classification}: C02, C60, E43, G12, G13.\\
{\em 2020 Mathematics Subject Classification}: 
60J25, 91G15, 91G20, 91G30.\\
Financial support from the Europlace Institute of Finance and the University of Padova (research programme BIRD190200/19) is gratefully acknowledged.}
\begin{document}

\maketitle

\begin{abstract}
Alternative risk-free rates (RFRs) play a central role in the reform of interest rate benchmarks. We study a model for RFRs driven by a general affine process. Under minimal assumptions, we derive explicit valuation formulas for forward-looking and backward-looking caplets/floorlets, term-basis caplets as well as 1-month and 3-month RFR futures contracts.
\end{abstract}

\section{Introduction}		\label{sec:intro}

The interest rate benchmarks reform is bringing a change of paradigm in fixed income markets. On 5 March 2021, the FCA announced that Libor rates will either cease to be provided or will no longer be representative benchmarks after 31 December 2021 (with the exception of US Libor rates for some tenors, that will be discontinued after June 2023)\footnote{See \url{https://www.fca.org.uk/news/press-releases/announcements-end-libor}.}.
The new benchmark rates, as well as fallback rates for existing contracts, are provided by alternative nearly {\em risk-free rates} (RFRs), which are determined by overnight rates backed by actual transactions. Such overnight rates include SOFR in the US, SONIA in the UK, \euro STR in the Euro area. 

The transition from Libor rates to RFRs has recently started to affect non-linear derivatives\footnote{As part of the {\em SOFR First} initiative, the Market Risk Advisory Committee (MRAC) of the Commodity Futures Trading Commission recommended to switch from Libor to SOFR in non-linear derivatives starting from 8 November 2021 (see \url{https://www.cftc.gov/PressRoom/PressReleases/8449-21}). A similar recommendation has been issued by the working group on Sterling risk-free reference rates starting on 11 May 2021.}.
The trading volume in SOFR caps/floors reached 926.9 USD bn in the first 9 months of 2022, increasing from 85.6 USD bn in the whole year 2021 (source: ISDA). In the case of SONIA caps/floors, the trading volume in the first 9 months of 2022 amounts to 210.9 USD bn, against 72.5 USD bn in the whole year 2021 (source: ISDA).

In the post-Libor universe, one can consider {\em forward-looking} and {\em backward-looking} caps/floors, depending on the rate which defines their payoff. 
While forward-looking caps/floors are based on forward-looking term rates and are conceptually similar to classical Libor caps/floors, backward-looking caps/floors have a significantly different nature, since their payoff is determined by the compounded in-arrears RFR (see \cite{LM19,Pit20}). 
While both types of caps/floors coexist in the current post-Libor market, backward-looking caps/floors play a particularly important role. Indeed, forward-looking term rates (such as CME term SOFR and ICE term SOFR) have been introduced only recently and are not supported by the Alternative Reference Rates Committee (ARRC) for use in derivatives markets, being restricted to derivatives that hedge cash products referencing term SOFR.
In addition, the Libor fallbacks protocol adopts backward-looking rates as fallback rates in existing contracts (see \cite{ISDA}).

In this paper, we derive pricing formulas for forward-looking and backward-looking options in the context of a short-rate RFR model driven by a general affine process. While the valuation of forward-looking payoffs is relatively straightforward, the backward-looking case requires a more elaborate analysis and has never been considered in the literature on affine processes, except for the specific case of Gaussian Hull-White models (see below for more details).
By relying on Fourier methods and a study of the integrability properties of certain functionals of the affine process, we obtain pricing formulas expressed as one-dimensional integrals, which can be efficiently implemented with fast Fourier transform (see \cite{cm99}).
We work under minimal technical assumptions, without imposing ad-hoc integrability requirements.
We also study {\em term-basis} caplets, corresponding to exchange options between forward-looking and backward-looking rates (see \cite{LM19}).
Moreover, we derive alternative pricing formulas based on forward measures, which can be explicitly computed for some popular interest rate models, including multi-factor Hull-White and Wishart models.
Finally, we study futures contracts, currently the most liquid RFR-based products. In the case of 1-month RFR futures, we obtain a pricing formula that does not even require the explicit characteristic function of the driving process.

Short-rate modeling is probably the most natural approach for modeling RFRs. In \cite{Merc18}, one of the first papers on SOFR modeling, a Gaussian Hull-White short-rate model is adopted. 
In recent short-rate approaches to RFR modeling, the Hull-White model has remained dominant. 
This is for instance the case of \cite{Hasegawa21,Hof20,Turf21,Xu22}, where pricing formulae for backward-looking caplets are derived\footnote{We want to point out that, in the specific context of a Gaussian one-factor Hull-White model, a backward-looking caplet can be priced by relying on the valuation formulas obtained in the earlier work \cite{Hen04}.}, and also of \cite{RB21}, where in addition collateralization and funding costs are taken into account.
However, the Hull-White model does not support volatility smiles (see, e.g., \cite{Pit20}). Moreover, as shown in \cite{AB20}, spikes and jumps are a prominent feature of RFRs and have a sizable effect on the pricing of backward-looking caplets.
This motivates the modeling of RFRs by means of general affine processes and the study of their pricing aspects. 
Let us also mention that backward-looking caplets have been analyzed in the seminal work \cite{LM19} in the context of an extended Libor market model, in \cite{Will21} for the SABR model and in \cite{MS20} by adopting a rational model for the savings account associated to an RFR.
A short-rate approach for modelling RFRs based on  affine semimartingales has been developed in \cite{FGrS22}, also allowing for jumps at predetermined times, but without a specific focus on pricing applications.

The paper is structured as follows. In Section \ref{sec:affine}, we recall some essential notions on affine processes and prove two additional  properties of the solutions to the associated Riccati ODEs. Section \ref{sec:model} contains the description of the modeling framework and the definition of forward/backward-looking rates. In Section \ref{sec:caplets}, we derive pricing formulas for backward/forward-looking caplets/ floorlets and term-basis caplets, while in Section \ref{sec:example}, we derive pricing formulas for forward-looking and backward-looking caplets in the context of a CIR++ model.
In Section \ref{sec:futures} we consider the valuation of futures contracts.

\section{Preliminaries on affine processes} 	\label{sec:affine}

In this section, we present some general results on affine processes that will be needed in the sequel. 
We let $(\Omega,\cF,P)$ be a probability space endowed with a filtration $\bF=(\cF_t)_{t\geq0}$, satisfying the usual conditions of right-continuity and completeness, and $X=(X_t)_{t\geq0}$ be a c\`adl\`ag adapted time-homogeneous conservative Markov process on $(\Omega,\cF,\bF,P)$, taking values in the state space $D:=\R^m_+\times\R^n$.\footnote{We restrict our attention to affine processes on $\R^m_+\times\R^n$ for simplicity of presentation. The modeling framework developed in this paper can be readily extended to matrix-valued affine processes, as characterized in \cite{CFMT11}, and all main results remain valid in the matrix-valued case with identical statements. In particular, this is possible since the results of \cite{krm12} are also applicable to affine processes taking values in the cone of symmetric positive-semidefinite matrices, up to an adaptation of the notation and of the proofs of Lemmata \ref{lem:Y_t} and \ref{lem:convex}.}
The family of the transition kernels of the Markov process $X$ is given by $\{p_t : D\times\cB_D\to[0,1];t\geq0\}$, where $\cB_D$ denotes the Borel $\sigma$-algebra of $D$.
We also introduce the set $\cU:=\C^m_-\times\im\R^n$, where $\C_-:=\{u\in\C : \real(u)\leq0\}$.
Setting $d:=m+n$, we recall the following definition (see \cite{dfs03} and \cite[Definition 2.2]{krm12}).

\begin{defn}	\label{def:affine}
The process $X$ is called {\em affine} with state space $D$ if
\begin{enumerate}[(i)]
\item it is stochastically continuous (i.e., its transition kernels satisfy $\lim_{s\rightarrow t}p_s(x,\cdot)=p_t(x,\cdot)$ weakly for all $(t,x)\in\R_+\times D$) and
\item there exist functions $\phi:\R_+\times\cU\rightarrow\C$ and $\psi:\R_+\times\cU\rightarrow\C^d$ such that
\be	\label{eq:Fourier_affine}
\int_De^{\langle u,\xi\rangle}p_t(x,\ud\xi)
= e^{\phi(t,u)+\langle\psi(t,u),x\rangle},
\ee
for all $(t,u)\in\R_+\times\cU$ and for every $x\in D$.
\end{enumerate}
\end{defn}

Requirement (i) in Definition \ref{def:affine} implies the regularity of the affine process $X$ (see  \cite[Theorem 5.1]{KRST11}), meaning that the derivatives $\partial_t\phi(t,u)|_{t=0}$ and $\partial_t\psi(t,u)|_{t=0}$ exist for all $u\in\cU$ and are continuous at $u=0$. As a consequence, in view of \cite[Theorem 2.7]{dfs03}, the functions $\phi$ and $\psi$ in \eqref{eq:Fourier_affine} are determined by the following system of generalized Riccati ODEs:
\[	\begin{aligned}
\frac{\partial \phi(t,u)}{\partial t}&=F\bigl(\psi(t,u)\bigr),\qquad \phi(0,u)=0,\\
\frac{\partial \psi(t,u)}{\partial t}&=R\bigl(\psi(t,u)\bigr),\qquad \psi(0,u)=u\in\cU,
\end{aligned}	\]
with the functions $F$ and $R$ admitting explicit representations of L\'evy-Khintchine type as follows:
\be	\label{eq:LevyKhint}\ba
F(u) &= \langle \alpha_0u,u\rangle + \langle \beta_0,u\rangle + \int_{D\setminus\{0\}}\bigl(e^{\langle u,\xi\rangle}-1-\langle u,h(\xi)\rangle\bigr)\mu_0(\ud\xi),	\\
R_i(u) &:= \langle \alpha_iu,u\rangle + \langle \beta_i,u\rangle + \int_{D\setminus\{0\}}\bigl(e^{\langle u,\xi\rangle}-1-\langle u,h(\xi)\rangle\bigr)\mu_i(\ud\xi),	
\quad\text{ for all }i=1,\ldots,d,\\
\ea	\ee
with respect to a set of parameters $(\alpha,\beta,\mu)$ satisfying the admissibility requirements of \cite[Definition 2.6]{dfs03}, where $\mu_i$, $i=0,1,\ldots,d$, are L\'evy measures on $D$ and $h$ is a truncation function. 
The functions $F$ and $R$ completely characterize the law of the affine process $X$ and are therefore called the functional characteristics of $X$.

In the following, we shall work with time integrals of the affine process $X$. More specifically, let $\Lambda\in\R^d$ and consider the process $Y=(Y_t)_{t\geq0}$ defined by $Y_t:=\int_0^t\langle\Lambda,X_s\rangle\ud s$, for all $t\geq0$. Note that, since $X$ is c\`adl\`ag, the integral is well-defined pathwise. The couple $(X,Y)$ can be regarded as a process on the enlarged state space $D\times\R$. The following well-known result, which is a direct consequence of \cite[Proposition 11.2]{dfs03}, asserts that $(X,Y)$ is an affine process.

\begin{prop}	\label{prop:extended_aff}
Let the process $(X,Y)$ be defined as above. Then, $(X,Y)$ is an affine process on the state space $D\times\R$ and, for all $(t,u,v)\in\R_+\times\cU\times\im\R$, it holds that
\be	\label{eq:conditional_Fourier}
\EE\bigl[e^{\langle u,X_T\rangle+vY_T}\big|\cF_t\bigr]
= e^{\Phi(T-t,u,v)+\langle\Psi(T-t,u,v),X_t\rangle+vY_t},
\ee
where $\Phi:\R_+\times\cU\times\im\R\rightarrow\C$ and $\Psi:\R_+\times\cU\times\im\R\rightarrow\C^d$ are solutions to
\begin{subequations}
\begin{align}	
\frac{\partial \Phi(t,u,v)}{\partial t}&=F\bigl(\Psi(t,u,v)\bigr),\qquad\qquad \;\,\Phi(0,u,v)=0,
\label{eq:Riccati1}\\
\frac{\partial \Psi(t,u,v)}{\partial t}&=R\bigl(\Psi(t,u,v)\bigr)+v\Lambda,\qquad \Psi(0,u,v)=u.
\label{eq:Riccati2}
\end{align}
\end{subequations}
\end{prop}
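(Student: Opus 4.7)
The plan is to invoke \cite[Proposition 11.2]{dfs03}, which precisely handles the enlargement of an affine process by the time integral of an affine functional, and then to read off the Riccati system from the functional characteristics of the enlarged process.

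First I would verify that $(X,Y)$ is a c\`adl\`ag adapted Markov process on $D\times\R$. The Markov property is inherited from $X$ because $Y_t-Y_s=\int_s^t\langle\Lambda,X_r\rangle\ud r$ is a path functional of $X$ on $[s,t]$, so the bivariate transition kernel at $(x,y)$ is the image of $p_t(x,\cdot)$ shifted by $y$ in the second component. Stochastic continuity of $(X,Y)$ follows from that of $X$ together with the (pathwise) absolute continuity of $t\mapsto Y_t$.

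Next I would apply \cite[Proposition 11.2]{dfs03} to conclude that $(X,Y)$ is affine on $D\times\R$ and to identify its functional characteristics. The idea is that on the enlarged space the infinitesimal generator acts on exponentials $(x,y)\mapsto e^{\langle u,x\rangle+vy}$ by the generator of $X$ plus the extra drift term $v\langle\Lambda,x\rangle$ coming from $\ud Y_t=\langle\Lambda,X_t\rangle\ud t$. Since $Y$ contributes neither diffusion nor jumps and its drift is linear in $X$, the admissibility conditions of \cite[Definition 2.6]{dfs03} are preserved and the functional characteristics $(\widetilde F,\widetilde R)$ of $(X,Y)$ read
\[
\widetilde F(u,v)=F(u),\qquad \widetilde R_i(u,v)=R_i(u)+v\Lambda_i\;(i=1,\ldots,d),\qquad \widetilde R_{d+1}(u,v)=0.
\]
The last entry being zero reflects the fact that $Y$ appears only linearly in its own dynamics.

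From this it only remains to write out the Riccati system of Proposition \ref{prop:extended_aff}. Denoting by $(\Phi,\Psi,\Psi_Y)$ the solutions associated with $(\widetilde F,\widetilde R)$, the equation $\partial_t\Psi_Y=\widetilde R_{d+1}(\Psi,\Psi_Y)=0$ with $\Psi_Y(0,u,v)=v$ forces $\Psi_Y(t,u,v)\equiv v$ for all $t\geq 0$. Substituting this constant into the remaining equations produces exactly \eqref{eq:Riccati1}--\eqref{eq:Riccati2}, and the conditional Fourier identity \eqref{eq:conditional_Fourier} follows from the affine property applied at time $t$ (the factor $e^{vY_t}$ appears because $\Psi_Y\equiv v$). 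I do not foresee any genuine obstacle here; the only point requiring mild care is checking that admissibility is preserved by the enlargement, but this is immediate since the new coordinate contributes no diffusion, no jumps, and a drift linear in $x$ with no constant term.
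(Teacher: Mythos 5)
Your proposal follows the same route the paper itself takes: the paper gives no argument beyond observing that the statement is a direct consequence of \cite[Proposition 11.2]{dfs03}, which is exactly the result you invoke, and your additional details (Markov property and stochastic continuity of $(X,Y)$, the enlarged characteristics $\widetilde F(u,v)=F(u)$, $\widetilde R_i(u,v)=R_i(u)+v\Lambda_i$, $\widetilde R_{d+1}\equiv 0$, hence $\Psi_Y\equiv v$ and the reduced Riccati system \eqref{eq:Riccati1}--\eqref{eq:Riccati2}) are a correct fleshing-out of that citation. No gap; this matches the paper's intended proof.
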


\begin{rem}	\label{rem:structure_psi}
For a generic vector $x\in\C^d$, let us introduce the notation $x=(x_I,x_J)\in\C^m\times\C^n$. Writing $\Psi(t,u,v)=(\Psi_I(t,u,v),\Psi_J(t,u,v))$, it can be seen that $\Psi_J(t,u,v)$ solves a linear ODE with initial value $u_J$ and is therefore globally well-defined on $\R_+$  (see \cite[Section 11.2]{dfs03}). In particular, $\Psi_J(t,u,v)$ is linear in $(u_J,v)$ and does not depend on $u_I$.
Note also that the function $\Phi(\cdot,u,v)$ can be explicitly solved as $\Phi(t,u,v)=\int_0^tF(\Psi(s,u,v))\ud s$, for all $(t,u,v)\in\R_+\times\cU\times\im\R$. 
\end{rem}

For pricing applications, we need to extend the domain of the affine transform formula \eqref{eq:conditional_Fourier} of the joint process $(X,Y)$ beyond the set $\cU\times\im\R$. This issue has been studied in detail in \cite{krm12}, whose approach is followed here. 
As a preliminary, noting that the admissibility requirements of \cite[Definition 2.6]{dfs03} imply that $\mu_j=0$, for all $j=m+1,\ldots,n$, let us define the set
\[
\cY := \biggl\{y\in\R^d : \sum_{i=0}^m\int_{\{\xi\in D : |\xi|\geq1\}}e^{\langle y,\xi\rangle}\mu_i(\ud\xi)<+\infty\biggr\}.
\]
The set $\cY$ is non-empty and convex and constitutes the effective real domain of the functions $F$ and $R$ given in \eqref{eq:LevyKhint}, which are therefore well-defined convex functions on $\cY$. 
Note that, by definition, the set $\cY$ only depends on the behavior of the large jumps of $X$. 

\begin{ass}	\label{ass:0}
It holds that $0\in\cY^{\circ}$, with $\cY^{\circ}$ denoting the interior of $\cY$.
\end{ass}

Noting that $0\in\cY$ always holds, Assumption \ref{ass:0} represents a mild technical requirement that can be easily checked, since the measures $\mu_i$, $i=0,1,\ldots,m$, are explicitly known in applications. In particular, Assumption \ref{ass:0} is trivially satisfied by every continuous affine process, since in that case $\cY=\R^d$.
Under Assumption \ref{ass:0}, the set $S(\cY^{\circ}):=\{u\in\C^d : \real(u)\in\cY^{\circ}\}$ is non-empty and the functions $F$ and $R$ can be analytically extended to $S(\cY^{\circ})$. This enables us to study the Riccati ODEs \eqref{eq:Riccati1}-\eqref{eq:Riccati2} for $(u,v)\in S(\cY^{\circ})\times\C$, replacing $F$ and $R$ by their analytic extensions to $S(\cY^{\circ})$.
Since $F$ and $R$ are locally Lipschitz on $S(\cY^{\circ})$, the (possibly local) solution $(\Phi,\Psi)$ to \eqref{eq:Riccati1}-\eqref{eq:Riccati2} is unique if constrained to stay in the open domain $S(\cY^{\circ})$.
For $(u,v)\in S(\cY^{\circ})\times\C$, let us denote by $T_+(u,v)$ the maximal lifetime of the solution to the  Riccati system \eqref{eq:Riccati1}-\eqref{eq:Riccati2} such that $\Psi(t,\real(u),\real(v))\in\cY^{\circ}$ for all $t\leq T_+(u,v)$.  

The following proposition is a direct consequence of \cite[Theorems 2.14 and 2.26]{krm12} and provides the desired extension of the affine transform formula \eqref{eq:conditional_Fourier}.

\begin{prop}	\label{prop:ext_Fourier}
Suppose that Assumption \ref{ass:0} holds. Then, the following hold:
\begin{enumerate}[(i)]
\item for $(u,v)\in\cY^{\circ}\times\R$, it holds that $\EE[e^{\langle u,X_T\rangle+vY_T}]<+\infty$ for all $T\leq T_+(u,v)$;
\item for $(u,v)\in S(\cY^{\circ})\times\C$, it holds that
\be	\label{eq:ext_Fourier}
\EE\bigl[e^{\langle u,X_T\rangle+vY_T}\big|\cF_t\bigr]
= e^{\Phi(T-t,u,v)+\langle\Psi(T-t,u,v),X_t\rangle+vY_t},
\ee
for all $0\leq t\leq T<+\infty$ such that $T-t\leq T_+(u,v)$, where $\Phi(\cdot,u,v)$ and $\Psi(\cdot,u,v)$ solve \eqref{eq:Riccati1}-\eqref{eq:Riccati2} with $F$ and $R$ analytically extended to $S(\cY^{\circ})$.
\end{enumerate}
\end{prop}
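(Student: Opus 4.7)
The plan is to derive the proposition as a direct application of \cite[Theorem 2.26]{krm12} to the joint affine process $(X,Y)$, rather than to $X$ alone. By Proposition \ref{prop:extended_aff}, $(X,Y)$ is already known to be affine on the state space $D\times\R=\R^m_+\times\R^{n+1}$, so the only remaining tasks are to verify that the hypothesis of \cite[Theorem 2.26]{krm12} holds for $(X,Y)$ and to translate the conclusions of that theorem into the notation used in the present statement.

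First, I would identify the L\'evy measures of $(X,Y)$. Since the paths $t\mapsto Y_t$ are absolutely continuous, $Y$ contributes no jumps, so for $i=0,1,\ldots,m$ the measures $\mu^{(X,Y)}_i$ on $(D\times\R)\setminus\{0\}$ are the push-forwards of $\mu_i$ under the embedding $\xi\mapsto(\xi,0)$, while the L\'evy measures associated with the $n+1$ real coordinates vanish, as required by admissibility. From this description it follows immediately that the analogue of the effective real domain $\cY$ for the joint process equals $\cY\times\R$. Consequently, Assumption \ref{ass:0} ($0\in\cY^{\circ}$) yields $(0,0)\in(\cY\times\R)^{\circ}$, which is precisely the hypothesis required in \cite[Theorem 2.26]{krm12}.

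Second, I would match the conclusions. Writing $\Psi^{(X,Y)}=(\Psi^{(X,Y)}_X,\Psi^{(X,Y)}_Y)$ for the $X$- and $Y$-components of the solution of the Riccati system associated with $(X,Y)$, a brief computation of the functional characteristics of $(X,Y)$ yields $F^{(X,Y)}(u,v)=F(u)$ and $R^{(X,Y)}(u,v)=(R(u)+v\Lambda,0)$, so that $\Psi^{(X,Y)}_Y\equiv v$ and $\Psi^{(X,Y)}_X$ solves \eqref{eq:Riccati2} with initial value $u$, while $\Phi^{(X,Y)}$ coincides with $\Phi$ from \eqref{eq:Riccati1}. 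Because the constraint $\real(\Psi^{(X,Y)}(t,\cdot))\in\cY^{\circ}\times\R$ is trivially satisfied in the $Y$-component, the maximal lifetime from \cite[Theorem 2.26]{krm12} coincides with $T_+(u,v)$ as defined in the text. Under this identification, part~(i) is the real-parameter moment bound and part~(ii) is the complex-extension statement of \cite[Theorem 2.26]{krm12}.

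The only step requiring genuine care is the identification of the joint L\'evy measures and of the augmented Riccati system, which must be made consistent with the quantities appearing in Proposition \ref{prop:extended_aff} and in the definition of $T_+(u,v)$; once this matching is in place, the proof reduces to a pure citation.
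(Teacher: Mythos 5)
Your proposal is correct and follows exactly the route the paper takes: the paper states Proposition \ref{prop:ext_Fourier} as a direct consequence of \cite[Theorem 2.26]{krm12} applied to the affine process $(X,Y)$ from Proposition \ref{prop:extended_aff}, and you simply make explicit the routine verifications (joint L\'evy measures supported on $D\times\{0\}$, effective domain $\cY\times\R$, augmented characteristics $F(u)$ and $(R(u)+v\Lambda,0)$, and the identification of the maximal lifetime with $T_+(u,v)$) that the paper leaves implicit. Nothing further is needed.
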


For convenience of notation, similarly as in \cite{HKRS17}, let us introduce the sets 
\[
\cY_t := \bigl\{(u,v)\in\cY^{\circ}\times\R : T_+(u,v)>t\bigr\},
\qquad
\cD_t := S(\cY_t) = \bigl\{(u,v)\in\C^{d+1}:(\real(u),\real(v))\in\cY_t\bigr\},
\]
for $t\geq0$. The sets $\cY_t$ and $\cD_t$ are open and the results of \cite{KRMS10} imply that $\cY_t$ is convex. Moreover, the affine transform formula \eqref{eq:ext_Fourier} holds for all $(u,v)\in\cD_{T-t}$.

In the following, we denote by $\preceq$ the partial order on $\R^m_+$ (i.e., for any $y,u\in \R^m_+$, the relation $y\preceq u$ means that $u-y\in\R^m_+$) and recall the notation $x=(x_I,x_J)$ for any vector $x\in D=\R^m_+\times\R^n$. It is easily seen that the set $\cY$ is order-preserving. In the following lemma, we show that this property extends to the set $\cY_t$, for all $t\geq0$.

\begin{lem}	\label{lem:Y_t}
Let $(u,v)\in\cY_t$, for some $t\geq0$. If $(y,z)\in D\times\R$ is such that $y_I\preceq u_I$, $y_J=u_J$ and $z=v$, then $(y,z)\in\cY_t$.
\end{lem}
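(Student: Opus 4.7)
The plan is to split the claim into two sub-claims: $y\in\cY^\circ$ and $T_+(y,z)>t$. Both will rest on the order-preservation of $\cY$ itself (which is immediate because $\xi\in D$ has $\xi_I\succeq0$, so $y_I\preceq u_I$ and $y_J=u_J$ yield $\langle y,\xi\rangle\leq\langle u,\xi\rangle$ and hence finiteness of the defining L\'evy integrals at $y$ whenever they are finite at $u$) combined with a comparison principle for the Riccati flow.

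For the first sub-claim, I would upgrade order-preservation of $\cY$ to the interior by a simple shift argument: pick $\delta>0$ with $B(u,\delta)\subset\cY$, and for any $y'\in B(y,\delta)$ set $w:=u+(y'-y)\in B(u,\delta)\subset\cY$. Since $w_J=y'_J$ and $w_I-y'_I=u_I-y_I\succeq0$, order-preservation of $\cY$ gives $y'\in\cY$, so $B(y,\delta)\subset\cY$ and $y\in\cY^\circ$.

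For the second sub-claim, set $\psi^u(s):=\Psi(s,u,v)$ and $\psi^y(s):=\Psi(s,y,z)$. By Remark \ref{rem:structure_psi}, $\Psi_J$ is linear in $(u_J,v)$ and independent of $u_I$, so $\psi^u_J\equiv\psi^y_J$; substituting this common trajectory into \eqref{eq:Riccati2} yields the same non-autonomous ODE
\[
\psi'_I(s) = R_I\bigl(\psi_I(s),\psi^u_J(s)\bigr) + v\Lambda_I
\]
for both $\psi^u_I$ and $\psi^y_I$, with initial conditions $y_I\preceq u_I$. The admissibility conditions of \cite[Definition 2.6]{dfs03} make the vector field $R_I(\cdot,\psi_J(s))$ quasi-monotone in its $I$-arguments on $\cY^\circ$ (the $\alpha_i$ decouple the $I$-coordinates for $i\in I$, the off-diagonal entries of the $I$-block of the $\beta_i$ are non-negative, and $\mu_i$ is supported on $D$ so that $\xi_k\geq0$ for $k\in I$), so the classical comparison theorem for quasi-monotone ODE systems delivers $\psi^y_I(s)\preceq\psi^u_I(s)$ throughout the common existence interval. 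Choosing $t'\in(t,T_+(u,v))$ and applying the shift argument of the first step at each $s\leq t'$ (with $u$ replaced by $\psi^u(s)$ and $y$ by $\psi^y(s)$) yields $\psi^y(s)\in\cY^\circ$ for $s\leq t'$, whence $\psi^y$ extends up to $t'$ without leaving $\cY^\circ$ and $T_+(y,z)\geq t'>t$.

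I expect the main obstacle to be cleanly extracting quasi-monotonicity of $R_I$ from the admissibility data, since this forces one to unpack the parameter structure of \cite{dfs03} (in particular the L\'evy-Khintchine representation of $R$ and the restrictions on $\alpha_i$, $\beta_i$ and $\mu_i$ for $i\in I$). Once that property is in hand, the comparison principle and the shift argument combine immediately to give the stated result.
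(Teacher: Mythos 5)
Your argument gets two steps right: the order-preservation of $\cY^{\circ}$ (your shift argument is a clean, self-contained substitute for the citation of \cite[Lemma 5.8]{krm12} that the paper uses), and the comparison $\Psi_I(s,y,z)\preceq\Psi_I(s,u,v)$ on the common interval of existence via quasi-monotonicity of $R_I$ (which the paper simply quotes from \cite[Lemma 5.7]{krm12} rather than re-deriving from the admissibility conditions). The gap is in the final step: from ``$\psi^y(s)\in\cY^{\circ}$ for all $s$ in the common existence interval'' you conclude that the solution extends up to $t'$, and this does not follow. The maximal lifetime $T_+(y,z)$ can be cut short in two distinct ways: the trajectory reaching $\partial\cY$, or the solution exploding in norm while remaining inside $\cY^{\circ}$ (for continuous affine processes $\cY=\R^d$, so explosion is the \emph{only} possible failure, and your argument says nothing about it). The comparison principle gives a one-sided bound in the partial order on the $I$-components only; it does not bound $|\Psi_I(s,y,z)|$, since the components could still blow up downward, and staying in the open set $\cY^{\circ}$ does not prevent the solution from leaving every compact subset of $\cY^{\circ}$ in finite time. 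So your shift argument rules out boundary contact but not explosion.

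This is exactly the part the paper treats separately as case (i): it invokes the growth estimate $\langle x_I,R_I(x)\rangle\leq g(x)(1+|x_J|^2)(1+|x_I|^2)$ from \cite[Lemma 5.5]{krm12}, observes (by inspecting that proof) that $g$ is monotone with respect to the order in the $I$-variables, so that $\frac{\partial}{\partial s}|\Psi_I(s,y,z)|^2$ is controlled in terms of the \emph{dominating} trajectory $\Psi(s,u,v)$, and then applies Gronwall's inequality to keep $|\Psi_I(s,y,z)|$ finite up to $T_+(u,v)$, excluding explosion. To complete your proof you need an a priori two-sided (norm) bound of this kind on $\Psi_I(\cdot,y,z)$ over $[0,t']$; the order comparison together with the interior shift argument alone establishes only the analogue of the paper's case (ii).
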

\begin{proof}
By \cite[Lemma 5.8]{krm12}, it holds that $y\in\cY^{\circ}$. Since $X_t$ takes values in $D$, it holds that
\[
\EE\bigl[e^{\langle y,X_t\rangle+zY_t}\bigr]
= \EE\bigl[e^{\langle y_I,X_{I,t}\rangle+\langle y_J,X_{J,t}\rangle+zY_t}\bigr]
\leq \EE\bigl[e^{\langle u_I,X_{I,t}\rangle+\langle u_J,X_{J,t}\rangle+vY_t}\bigr]
= \EE\bigl[e^{\langle u,X_t\rangle+vY_t}\bigr]<+\infty,
\]
due to the assumption that $(u,v)\in\cY_t$ together with part (i) of Proposition \ref{prop:ext_Fourier}. 
By \cite[Theorem 2.14-(a)]{krm12}, this implies the existence of a solution $(\Phi(\cdot,y,z),\Psi(\cdot,y,z))$ to \eqref{eq:Riccati1}-\eqref{eq:Riccati2} up to time $t$. It remains to prove that $\Psi(s,y,z)\in\cY^{\circ}$ for all $s\leq t$. To this effect, note first that $\Psi_J(s,u,v)=\Psi_J(s,y,z)$, for all $s\geq0$, since $\Psi_J(\cdot,u,v)$ does not depend on $u_I$ (see Remark \ref{rem:structure_psi}) and $(u_J,v)=(y_J,z)$. 
Hence, considering the first $m$ components of \eqref{eq:Riccati2}, we have that
\[
\frac{\partial}{\partial s}\Psi_I(s,y,z) = R_I\bigl(\Psi_I(s,y,z),\Psi_J(s,u,v)\bigr)+v\Lambda_I,
\qquad\Psi_I(0,y,z)=y_I.
\]
By \cite[Lemma 5.7]{krm12}, the map $x_I\to R_I(x_I,\Psi_J(s,u,v))$ is quasi-monotone increasing with respect to the natural cone $\R^m_+$, for all $s\geq0$. The comparison result of \cite{Volk73} implies that $\Psi_I(s,y,z)\preceq\Psi_I(s,u,v)$, for all $s<T_+(y,z)\wedge T_+(u,v)$. Arguing by contradiction, suppose that $T_+(y,z)<T_+(u,v)$. Since we already know that $\Psi(\cdot,y,z)$ cannot explode before time $t$, this means that $\Psi(\cdot,y,z)$ reaches the boundary of $\cY$ before time $T_+(u,v)$. However, continuity implies that $\Psi_I(T_+(y,z),y,z)\preceq\Psi_I(T_+(y,z),u,v)$ and, since $T_+(y,z)<T_+(u,v)$, we have that $\Psi(T_+(y,z),u,v)\in\cY^{\circ}$. 
Again by \cite[Lemma 5.8]{krm12}, this implies that $\Psi(T_+(y,z),y,z)\in\cY^{\circ}$, thus obtaining a contradiction. 
Therefore, $T_+(y,z)\geq T_+(u,v)$ must necessarily hold. Since $T_+(u,v)>t$ by assumption, this shows that $(y,z)\in\cY_t$.
\end{proof}

We close this section with the following result on the convexity of the function $\Psi_I$. While it can be deduced from the results of \cite{KRMS10}, we prefer to provide a self-contained direct proof.

\begin{lem}	\label{lem:convex}
Let $(u_1,v_1),(u_2,v_2)\in\cY_t$, for some $t\geq0$, and define $(u_{\lambda},v_{\lambda}):=\lambda(u_1,v_1)+(1-\lambda)(u_2,v_2)$, for $\lambda\in[0,1]$. Then it holds that $\Psi_I(s,u_{\lambda},v_{\lambda}) \preceq \lambda\Psi_I(s,u_1,v_1) + (1-\lambda)\Psi_I(s,u_2,v_2)$, for all $s\in[0,t]$.
\end{lem}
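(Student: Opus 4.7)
The plan is to set $w_s := \Psi_I(s, u_\lambda, v_\lambda)$ and $\bar w_s := \lambda\Psi_I(s, u_1, v_1)+(1-\lambda)\Psi_I(s, u_2, v_2)$ and to prove $w_s\preceq \bar w_s$ on $[0,t]$ by a Volkmann-type comparison argument. The argument combines two structural properties of $R_I$: componentwise convexity on $\cY^{\circ}$ (inherited from the L\'evy--Khintchine form \eqref{eq:LevyKhint}) and quasi-monotonicity with respect to $\R^m_+$ (as in \cite[Lemma 5.7]{krm12}). Preliminarily, convexity of $\cY_t$ (recalled immediately before Lemma \ref{lem:Y_t}) ensures $(u_\lambda,v_\lambda)\in\cY_t$, so that $w_s$ is well-defined on $[0,t]$.

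First I exploit Remark \ref{rem:structure_psi}: since $\Psi_J(\cdot,u,v)$ is linear in $(u_J,v)$ and does not depend on $u_I$, one has
\[
\Psi_J(s,u_\lambda,v_\lambda)=\lambda\Psi_J(s,u_1,v_1)+(1-\lambda)\Psi_J(s,u_2,v_2)=:\bar\Psi_J(s),
\qquad s\in[0,t].
\]
Therefore the first $m$ components of \eqref{eq:Riccati2} read
\[
\frac{\ud}{\ud s}w_s = R_I\bigl(w_s,\bar\Psi_J(s)\bigr)+v_\lambda\Lambda_I,
\qquad w_0 = \lambda u_{1,I}+(1-\lambda)u_{2,I}.
\]

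Next I differentiate $\bar w_s$ and apply componentwise convexity of $R_I$. The relevant point $(\bar w_s,\bar\Psi_J(s))$ equals $\lambda\Psi(s,u_1,v_1)+(1-\lambda)\Psi(s,u_2,v_2)$ and thus lies in $\cY^{\circ}$ by convexity of $\cY^{\circ}$, so that
\[
\frac{\ud}{\ud s}\bar w_s
= \lambda R_I\bigl(\Psi(s,u_1,v_1)\bigr)+(1-\lambda)R_I\bigl(\Psi(s,u_2,v_2)\bigr)+v_\lambda\Lambda_I
\succeq R_I\bigl(\bar w_s,\bar\Psi_J(s)\bigr)+v_\lambda\Lambda_I,
\]
with matching initial condition $\bar w_0 = w_0$. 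Thus $\bar w_s$ is a super-solution of the ODE solved by $w_s$.

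The conclusion is the comparison step. Since the map $x_I\mapsto R_I(x_I,\bar\Psi_J(s))+v_\lambda\Lambda_I$ is quasi-monotone increasing on $\cY^{\circ}$ with respect to $\R^m_+$, Volkmann's comparison theorem \cite{Volk73} (already used in the proof of Lemma \ref{lem:Y_t}) propagates the super-solution property to the partial order, yielding $w_s\preceq\bar w_s$ on $[0,t]$, which is exactly the claim. The main subtlety is that $\bar w_s$ does \emph{not} solve the same Riccati ODE as $w_s$: componentwise convexity is needed to turn the convex combination of two Riccati trajectories into a super-solution, and quasi-monotonicity is the extra ingredient that makes super-solutions dominate solutions under the partial order $\preceq$.
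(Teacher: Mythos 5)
Your proof is correct and follows essentially the same route as the paper's: linearity of $\Psi_J$ to reduce to the $I$-components, convexity of $\cY_t$ to ensure $(u_\lambda,v_\lambda)\in\cY_t$ and that all relevant points stay in $\cY^{\circ}$, componentwise convexity of $R_I$ to turn the convex combination of trajectories into a super-solution, and Volkmann's comparison theorem via quasi-monotonicity of $x_I\mapsto R_I(x_I,\Psi_J(s,u_\lambda,v_\lambda))$. No substantive differences from the paper's argument.
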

\begin{proof}
For brevity of notation, let us define $f(s):=\lambda\Psi_I(s,u_1,v_1)+(1-\lambda)\Psi_I(s,u_2,v_2)$, for all $s\in[0,t]$.
Since the map $(u,v)\mapsto\Psi_J(s,u,v)$ is linear and does not depend on $u_I$ (see Remark \ref{rem:structure_psi}), we have that $\Psi_J(s,u_{\lambda},v_{\lambda}) = \lambda\Psi_J(s,u_1,v_1) + (1-\lambda)\Psi_J(s,u_2,v_2)$. 
By \eqref{eq:Riccati2} together with the convexity of $R_I$, it holds that
\begin{align*}
&\frac{\partial}{\partial s}\Psi_I(s,u_{\lambda},v_{\lambda})
- R_I\bigl(\Psi_I(s,u_{\lambda},v_{\lambda}),\Psi_J(s,u_{\lambda},v_{\lambda})\bigr)	= \Lambda_I v_{\lambda}	
= \lambda\Lambda_I v_1 + (1-\lambda)\Lambda_I v_2	\\
&= \lambda\left(\frac{\partial}{\partial s}\Psi_I(s,u_1,v_1)
- R_I\bigl(\Psi_I(s,u_1,v_1),\Psi_J(s,u_1,v_1)\bigr)\right)	\\
&\quad
+(1-\lambda)\left(\frac{\partial}{\partial s}\Psi_I(s,u_2,v_2)
- R_I\bigl(\Psi_I(s,u_2,v_2),\Psi_J(s,u_2,v_2)\bigr)\right)	\\
&\leq \frac{\partial}{\partial s}f(s) - R_I\bigl(f(s),\Psi_J(s,u_{\lambda},v_{\lambda})\bigr).
\end{align*}
By assumption, $T_+(u_1,v_1)\wedge T_+(u_2,v_2)>t$. Moreover, due to the convexity of $\cY_t$, we have that $T_+(u_{\lambda},v_{\lambda})>t$. This implies that $(f(s),\Psi_J(s,u_{\lambda},v_{\lambda}))\in\cY^{\circ}$ and $(\Psi_I(s,u_{\lambda},v_{\lambda}),\Psi_J(s,u_{\lambda},v_{\lambda}))\in\cY^{\circ}$, for all $s\in[0,t]$.
The claim then follows from the comparison result of \cite{Volk73}, using the fact that the map $x_I\mapsto R_I(x_I,\Psi_J(s,u_{\lambda},v_{\lambda}))$ is quasi-monotone increasing, for all $s\in[0,t]$.
\end{proof}

\begin{rem}	\label{rem:semiflow}
For later use, we recall that the functions $\Phi(\cdot,u,v)$ and $\Psi(\cdot,u,v)$ appearing in Proposition \ref{prop:ext_Fourier} satisfy the following semiflow relations, for all $(u,v)\in\cY^{\circ}\times\R$:
\be	\label{eq:semiflow}\begin{aligned}
\Phi(s+t,u,v)&=\Phi(t,u,v)+\Phi(s,\Psi(t,u,v),v),	\\
\Psi(s+t,u,v)&=\Psi(s,\Psi(t,u,v),v),
\end{aligned}	\ee
for all $0\leq s\leq t<+\infty$ such that $s+t\leq T_+(u,v)$ (see \cite[Lemma 4.3]{krm12}).
\end{rem}

\section{Model setup}	\label{sec:model}

In this section, we present the general affine modeling framework  for a risk-free overnight rate. Let $X=(X_t)_{t\geq0}$ be an affine process on $(\Omega,\cF,\bF,P)$ taking values in the space $D=\R^m_+\times\R^n$. 
Adopting a short rate approach, we model the instantaneous RFR process $r=(r_t)_{t\geq0}$ as follows:
\be	\label{eq:short_rate}
r_t := \ell(t) + \langle\Lambda,X_t\rangle,
\qquad\text{ for all }t\geq0,
\ee
where $\Lambda\in\R^d$, with $d=m+n$, and $\ell:\R_+\rightarrow\R$ is a function satisfying $\int_0^T|\ell(t)|\ud t<+\infty$, for all $T>0$.
In line with \cite{BM01}, the function $\ell$ in \eqref{eq:short_rate} serves to fit the RFR term structure at $t=0$.
We assume that $r$ also represents the collateral rate for collateralized OTC transactions, as well as the price alignment interest (PAI) for cleared derivatives. As pointed out by \cite{RB21}, this assumption is consistent with the collateralization schemes currently prevailing in the market.
Adopting a martingale approach, we can therefore assume that $P$ is a pricing measure with respect to the savings account $B=(B_t)_{t\geq0}$ given by $B_t:=\exp(\int_0^tr_s\ud s)$, $t\geq0$.
Denoting by $P_t(T)$ the price at time $t$ of a zero-coupon bond (ZCB) with maturity $T$, it holds that
\be	\label{eq:bond_price}
P_t(T) = \EE[B_t/B_T|\cF_t],
\qquad\text{ for all }0\leq t\leq T<+\infty.
\ee

We denote $Y:=\int_0^{\cdot}\langle\Lambda,X_s\rangle\ud s$, similarly as in Section \ref{sec:affine}, and introduce the following  assumption.

\begin{ass}	\label{ass:bond}
For every $T>0$, it holds that $(0,-1)\in\cY_T$.
\end{ass}

In view of Proposition \ref{prop:ext_Fourier}, Assumption \ref{ass:bond} ensures that ZCB prices are well-defined by \eqref{eq:bond_price} for all maturities $T>0$. Note also that Assumption \ref{ass:bond} implies the validity of Assumption \ref{ass:0}. 
In the following, we always suppose that Assumption \ref{ass:bond} is satisfied without further mention.
We emphasize that the pricing formulae obtained in Section \ref{sec:caplets} will not rely on any additional technical or integrability requirement beyond Assumption \ref{ass:bond}.

In line with the terminology of \cite{LM19}, we give the following definition, denoting by $\EE^T$ the expectation under the $T$-forward measure $P^T$ defined by $\ud P^T/\ud P:=1/(B_TP_0(T))$, for $T>0$.

\begin{defn}	\label{def:rates}
For all $0\leq S<T<+\infty$, we define: 
\begin{enumerate}[(i)]
\item the {\em backward-looking rate} $R(S,T):=(B_T/B_S-1)/(T-S)$;
\item the {\em forward-looking rate} $F(S,T):=\EE^T[R(S,T)|\cF_S]$. 
\end{enumerate}
\end{defn}

We point out that part (ii) of Definition \ref{def:rates} is well-posed since ZCB prices are well-defined for all maturities. More specifically, for all $0\leq S<T<+\infty$, it holds that
\be	\label{eq:fwd_rate}
1+(T-S)F(S,T) 
= \EE^T[B_T/B_S|\cF_S]
= 1/P_S(T).
\ee
The fundamental difference between forward-looking and backward-looking rates consists in the fact that the forward-looking rate $F(S,T)$ is observable at time $S$, while the backward-looking rate $R(S,T)$ is only known at the end of the accrual period $[S,T]$. According to Definition \ref{def:rates}, the forward-looking rate $F(S,T)$ represents the fixed rate $K$ that makes equal to zero the value at time $S$ of a swap (overnight indexed swap) delivering payoff $R(S,T)-K$ at maturity $T$.

For brevity of notation, we introduce the following shorthand notation, for $(u,v)\in S(\cY^{\circ})\times\C$:
\begin{align*}
A^0(t,t+\tau,v) &:= \Phi(\tau,0,-v)-vL(t,t+\tau),
&B^0(\tau,v) &:= \Psi(\tau,0,-v),\quad&\text{for all }\tau\leq T_+(0,-v),	\\
A^1(t,t+\tau,u) &:= \Phi(\tau,u,-1)-L(t,t+\tau),
&B^1(\tau,u) &:= \Psi(\tau,u,-1),\quad&\text{for all }\tau\leq T_+(u,-1),
\end{align*}
where $L(t,t+\tau):=\int_t^{t+\tau}\!\ell(z)\ud z$ and $(\Phi,\Psi)$ solves \eqref{eq:Riccati1}-\eqref{eq:Riccati2}. Making use of this notation and in view of Proposition \ref{eq:ext_Fourier} and Assumption \ref{ass:bond}, ZCB prices can be expressed as follows:
\be	\label{eq:ZCB_affine}
P_t(T) = e^{A^0(t,T,1)+\langle B^0(T-t,1),X_t\rangle},
\qquad\text{ for all }0\leq t\leq T<+\infty.
\ee

\begin{rem}	\label{rem:approx}
In reality, the backward-looking rate is computed by compounding at a daily frequency the overnight rate over the period $[S,T]$, corresponding to the following quantity:
\be	\label{eq:real_rate}
R'(S,T) := \frac{1}{T-S}\left(\prod_{i=1}^n\frac{1}{P_{t_i}(t_i+\delta_i)}-1\right),
\ee
where the product is taken over the business days $(t_1,\ldots,t_n)$ comprised between $S$ and $T$, with $\delta_i$ denoting the day-count fraction, for $i=1,\ldots,n$. Due to its affine structure, our model allows for an explicit representation of the discretely compounded rate $R'(S,T)$, given by
\be	\label{eq:formula_real_rate}
R'(S,T) = \frac{1}{T-S}\Bigl(e^{-\sum_{i=1}^nA^0(t_i,t_i+\delta_i,1)-\sum_{i=1}^n\langle B^0(\delta_i,1),X_{t_i}\rangle}-1\Bigr).
\ee
The specification of the backward-looking rate $R(S,T)$ adopted in  Definition \ref{def:rates} corresponds to the usual continuous-time approximation of \eqref{eq:real_rate} (see, e.g., \cite{LM19}) and is adopted for simplicity of computation only. The approximation error can be quantified exactly by relying on formula \eqref{eq:formula_real_rate} and can be shown to be practically negligible (see, e.g., \cite[Appendix D]{SS21}).
\end{rem}

\begin{rem}
The modeling framework can be extended to time-inhomogeneous affine processes, with time-dependent parameters $(\alpha,\beta,\mu)$ in \eqref{eq:LevyKhint}. All pricing results derived in the next section can be easily generalized to this case, provided that suitable exponential moments exists.
For time-inhomogeneous affine processes, under an additional integrability condition on the jump measure, it has been shown in \cite[Theorem 5.1]{KMK10} that the transform formula \eqref{eq:ext_Fourier} holds for real $(u,v)$ if there exists a solution to the Riccati equations \eqref{eq:Riccati1}-\eqref{eq:Riccati2} up to time $T$. 
However, while in the case of time-homogeneous affine processes we are able to characterize and provide conditions for the existence of solutions to the Riccati equations up to a certain time, for time-inhomogeneous affine processes one has to assume it a priori, similarly as in \cite[Section 5]{KMK10}. For this reason, in the present paper we restrict our attention to time-homogeneous affine processes.
We mention that the extension to time-inhomogeneous processes can be of practical interest in view of calibrating the model simultaneously to different products, such as forward/backward-looking caplets and term-basis caplets (see Section \ref{sec:caplets}).
\end{rem}

\section{Pricing of options on forward and backward-looking rates}
\label{sec:caplets}

As explained in Section \ref{sec:intro}, one can consider two distinct types of caplets/floorlets, depending on whether the payoff is determined by the forward-looking or the backward-looking rate. 

\begin{defn}	\label{def:caplets}
For $0\leq S<T<+\infty$ and $K>0$,
\begin{enumerate}[(i)]
\item a {\em forward-looking caplet} is defined by the payoff $(T-S)(F(S,T)-K)^+$ at maturity $T$;
\item a {\em backward-looking caplet} is defined by the payoff $(T-S)(R(S,T)-K)^+$ at maturity $T$.
\end{enumerate}
\end{defn} 

Forward-looking and backward-looking floorlets are defined in an analogous way. Since the reference probability $P$ is assumed to be a pricing measure with respect to the num\'eraire $B$, arbitrage-free prices of forward-looking and backward-looking caplets can be expressed as follows:
\be	\label{eq:general_formulae}\begin{aligned}
\PicF_t(S,T,K) 
&:= (T-S)\,\EE\left[\frac{B_t}{B_T}(F(S,T)-K)^+\Big|\cF_t\right],	\\
\PicB_t(S,T,K) 
&:= (T-S)\,\EE\left[\frac{B_t}{B_T}(R(S,T)-K)^+\Big|\cF_t\right].
\end{aligned}\ee
The prices of forward-looking and backward-looking floorlets can be expressed in a similar way and shall be denoted by $\PifF_t(S,T,K)$ and $\PifB_t(S,T,K)$, respectively.

\begin{rem}
(1) As pointed out by several authors (see \cite{LM19,MS20,Pit20}), Definition \ref{def:rates} together with Jensen's inequality implies the following relation between the two types of caplets:
\begin{align*}
\PicF_t(S,T,K) 
&= (T-S)P_t(T)\EE^T\bigl[(F(S,T)-K)^+\big|\cF_t\bigr]	\\
&\leq (T-S)P_t(T)\EE^T\bigl[(R(S,T)-K)^+\big|\cF_t\bigr]
= \PicB_t(S,T,K),
\end{align*}
with the same inequality holding in the case of floorlets.

(2) Always as a consequence of Definition \ref{def:rates}, it holds that
\[
\PicB_t(S,T,K)-\PifB_t(S,T,K)
= \PicF_t(S,T,K)-\PifF_t(S,T,K),
\qquad\text{ for all }t\in[0,S].
\]
This relation implies that, before the beginning of the accrual period, forward-looking and backward-looking caplets/floorlets satisfy the same put-call parity relation.
\end{rem}

\begin{rem}
According to the prevailing ISDA protocol (see \cite{ISDA}), the primary fallback for derivatives based on Libor rates is the compounded overnight risk-free rate with the addition of a {\em credit adjustment spread}. In the case of Libor caplets, this means that a tenor-dependent spread $c(T-S)$ has to be added to the rate in \eqref{eq:general_formulae}. For all combinations of tenors and currencies, the values of the credit adjustment spread have been fixed on 5 March 2021 on the basis of historical data\footnote{See \url{https://www.isda.org/2021/03/05/isda-statement-on-uk-fca-libor-announcement}.}.
Therefore, since the spread $c(T-S)$ is predetermined, the valuation of caplets/floorlets under the ISDA fallback provisions reduces to formulae \eqref{eq:general_formulae}, replacing $K$ with $K-c(T-S)$.

The possibility of a {\em dynamic} credit adjustment spread $(S_t)_{t\geq0}$, such as the across-the-curve credit spread index (AXI) proposed by \cite{BDZ20}, can also be taken into account in our framework. Modeling $(S_t)_{t\geq0}$ as an affine function of $(X_t)_{t\geq0}$, one can derive pricing formulas for options written on credit-adjusted RFRs by a straightforward adaptation of the following theorems.
\end{rem}

Until the end of this section, let us consider fixed but arbitrary $0\leq S<T<+\infty$ and $K>0$. For brevity of notation, we denote
\[
K':=1+(T-S)K.
\]
Moreover, we define  
\[
h^K(w,\lambda) := \frac{1}{2\pi}\frac{(1+(T-S)K)^{w+\im\lambda}}{(w+\im\lambda)(w-1+\im\lambda)},
\qquad\text{ for all $w\in\R\setminus\{0,1\}$ and $\lambda\in\R$}.
\]
Note that the function $\lambda\mapsto h^K(w,\lambda)$ is integrable on $\R$, for every $w\in\R\setminus\{0,1\}$.

The next result presents a general pricing formula for forward-looking caplets/floorlets, based on a Fourier decomposition of the payoff. Since forward-looking caplets/floorlets can be reduced to put/call options on ZCBs (see formula \eqref{eq:fwd_proof} below), the proof follows the well-known structure of \cite[Corollary 10.4]{fil09}, here extended to the case of general affine processes.

\begin{thm}	\label{thm:forward}
There exist constants $w_-<0$ and $w_+>1$ such that the prices of forward-looking caplets and floorlets can be expressed respectively as follows, for all $w\in(w_-,w_+)\setminus\{0,1\}$:
\begin{align}
\PicF_t(S,T,K) &= \begin{cases}
\PiF_t(w),&\text{ for }w\in(w_-,0),	\\
\PiF_t(w)+P_t(S),&\text{ for }w\in(0,1),
\end{cases}
\label{eq:fwd_caplet_Fourier}\\
\PifF_t(S,T,K) &= \begin{cases}
\PiF_t(w),&\text{ for }w\in(1,w_+),	\\
\PiF_t(w)+K'P_t(T),&\text{ for }w\in(0,1),
\end{cases}
\label{eq:fwd_floorlet_Fourier}
\end{align}
for all $t\in[0,S]$, where
\[
\PiF_t(w) := \int_{\R}e^{(w+\im\lambda)A^0(S,T,1)+A^1(t,S,(w+\im\lambda)B^0(T-S,1))+\langle B^1(S-t,(w+\im\lambda)B^0(T-S,1)),X_t\rangle}h^K(w,\lambda)\ud\lambda.
\]
\end{thm}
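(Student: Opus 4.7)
The plan is to reduce the caplet/floorlet payoff to a put/call on the zero-coupon bond $P_S(T)$, apply a Carr--Madan Fourier decomposition with a real dampening parameter $w$, and combine with the extended affine transform formula \eqref{eq:ext_Fourier}. Since $1+(T-S)F(S,T)=1/P_S(T)$ by \eqref{eq:fwd_rate}, the caplet payoff at $T$ factors as $P_S(T)^{-1}(1-K'P_S(T))^+$, and using $P_S(T)=\EE[B_S/B_T|\cF_S]$ together with the tower property in \eqref{eq:general_formulae} yields
\be	\label{eq:fwd_proof}
\PicF_t(S,T,K)=\EE\bigl[(B_t/B_S)(1-K'P_S(T))^+\big|\cF_t\bigr],
\ee
and analogously $\PifF_t(S,T,K)=\EE[(B_t/B_S)(K'P_S(T)-1)^+\,|\,\cF_t]$.

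The next step, which is the main obstacle, is to establish the strip $(w_-,w_+)$ without any ad hoc integrability assumption. By Assumption \ref{ass:bond} one has $(0,-1)\in\cY_T\subseteq\cY_S$. The semiflow relation \eqref{eq:semiflow} gives $\Psi(S,B^0(T-S,1),-1)=\Psi(T,0,-1)$, so that $T_+(B^0(T-S,1),-1)=T_+(0,-1)-(T-S)>S$ and hence $(B^0(T-S,1),-1)\in\cY_S$. Since $\cY_S$ is open and convex, the segment joining these two points lies in $\cY_S$ and, by openness, extends to an open interval, providing constants $w_-<0<1<w_+$ such that $(wB^0(T-S,1),-1)\in\cY_S\subseteq\cY_{S-t}$ for all $w\in(w_-,w_+)$ and $t\in[0,S]$.

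A direct residue computation then gives, for every $\alpha\in\R$ and $w\in\R\setminus\{0,1\}$,
\[
\frac{1}{2\pi}\int_\R\frac{e^{(w+\im\lambda)\alpha}}{(w+\im\lambda)(w-1+\im\lambda)}\,\ud\lambda=\begin{cases}(1-e^\alpha)^+,&w<0,\\(1-e^\alpha)^+-1=(e^\alpha-1)^+-e^\alpha,&w\in(0,1),\\(e^\alpha-1)^+,&w>1,\end{cases}
\]
which, applied pathwise with $\alpha=\log(K'P_S(T))$, expresses $(1-K'P_S(T))^+$ and $(K'P_S(T)-1)^+$ as one-dimensional Fourier integrals, plus the respective correction terms $1$ and $K'P_S(T)$ in the strip $w\in(0,1)$.

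Finally, the bound $|P_S(T)^{w+\im\lambda}|=P_S(T)^w$ together with $|h(w,\lambda)|=O(\lambda^{-2})$ and the finiteness of $\EE[(B_t/B_S)P_S(T)^w|\cF_t]$, ensured by the previous step via Proposition \ref{prop:ext_Fourier}(i), justifies Fubini. Substituting the ZCB formula \eqref{eq:ZCB_affine} and applying \eqref{eq:ext_Fourier} with $u=(w+\im\lambda)B^0(T-S,1)$ and $v=-1$ produces
\[
\EE\bigl[(B_t/B_S)P_S(T)^{w+\im\lambda}\big|\cF_t\bigr]=e^{(w+\im\lambda)A^0(S,T,1)+A^1(t,S,u)+\langle B^1(S-t,u),X_t\rangle},
\]
from which \eqref{eq:fwd_caplet_Fourier}--\eqref{eq:fwd_floorlet_Fourier} follow by incorporating the correction terms of the previous step and using $\EE[B_t/B_S|\cF_t]=P_t(S)$ together with $\EE[(B_t/B_S)P_S(T)|\cF_t]=P_t(T)$.
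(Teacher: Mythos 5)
Your proposal is correct and follows essentially the same route as the paper: reduction of the caplet/floorlet to a ZCB put/call via \eqref{eq:fwd_proof}, construction of the dampening strip $(w_-,w_+)$ from Assumption \ref{ass:bond}, the semiflow relation \eqref{eq:semiflow} and the openness/convexity of $\cY_S$, Fubini justified through Proposition \ref{prop:ext_Fourier}, and the Fourier payoff identity (which you re-derive by residues rather than invoking Lemma \ref{lem:Fourier_payoffs}, an immaterial difference). No gaps worth noting.
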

\begin{proof}
By Assumption \ref{ass:bond}, it holds that $T_+(0,-1)>T$. The semiflow property \eqref{eq:semiflow} implies that $\Psi(S,B^0(T-S,1),-1) = \Psi(T,0,-1)$ and, hence, $(B^0(T-S,1),-1)\in\cY_{S}$. Since  $\cY_{S}$ is open, there exist $w_+>1$ and $w_-<0$ such that $(wB^0(T-S,1),-1)\in\cY_{S}$ for all $w\in(w_-,w_+)$.
In view of Proposition \ref{prop:ext_Fourier}-(i), this implies that
\begin{align*}
&\EE\left[\frac{1}{B_S}\int_{\R}\bigl|P_S(T)^{w+\im\lambda}h^K(w,\lambda)\bigr|\ud\lambda\right]	\\
&\leq \EE\left[\frac{P_S(T)^w}{B_S}\right]\int_{\R}|h^K(w,\lambda)|\ud\lambda	\\
&= \EE\Bigl[e^{-Y_S-L(0,S)+wA^0(S,T,1)+w\langle B^0(T-S,1),X_S\rangle}\Bigr]\int_{\R}|h^K(w,\lambda)|\ud\lambda <+\infty,
\end{align*}
for all $w\in(w_-,w_+)\setminus\{0,1\}$.
We can therefore apply Fubini's theorem and obtain
\begin{align}
&\EE\left[\frac{B_t}{B_S}\int_{\R}P_S(T)^{w+\im\lambda}h^K(w,\lambda)\ud\lambda\bigg|\cF_t\right]	\notag\\
&= \int_{\R}\EE\left[\frac{B_t}{B_S}P_S(T)^{w+\im\lambda}\bigg|\cF_t\right]h^K(w,\lambda)\ud\lambda	
\label{eq:fwd_Fubini}\\
&= \int_{\R}\EE\Bigl[e^{-(Y_S-Y_t)-L(t,S)+(w+\im\lambda)A^0(S,T,1)+(w+\im\lambda)\langle B^0(T-S,1),X_S\rangle}\Big|\cF_t\Bigr]h^K(w,\lambda)\ud\lambda
\notag
=\PiF_t(w),
\end{align}
where the last equality follows from an application of formula \eqref{eq:ext_Fourier}, which is justified due to the fact that $((w+\im\lambda)B^0(T-S,1),-1)\in\cD_S\subseteq\cD_{S-t}$.
The result then follows by noting that
\be	\label{eq:fwd_proof}
\PicF_t(S,T,K)
= \EE\biggl[\frac{B_t}{B_T}\left(\frac{1}{P_S(T)}-K'\right)^+\bigg|\cF_t\biggr]
= \EE\left[\frac{B_t}{B_S}\left(1-K'P_S(T)\right)^+\bigg|\cF_t\right]
\ee
and applying Lemma \ref{lem:Fourier_payoffs} to the last term on the right-hand side of \eqref{eq:fwd_proof}, making use of \eqref{eq:fwd_Fubini}. 
In the case of a forward-looking floorlet, formula \eqref{eq:fwd_floorlet_Fourier} can be proved in an analogous way.
\end{proof}

The valuation of backward-looking caplets/floorlets requires a more delicate analysis, mainly as a consequence of the different measurability properties of their payoffs. 
In particular, we shall make use of the following integrability property of a functional of the affine process.

\begin{lem}	\label{lem:Holder}
For every $0\leq S\leq T<+\infty$, there exist constants $\eta_1,\eta_2\in(0,1)$ such that
\be	\label{eq:Holder}
\EE\bigl[e^{-(1+\eta_1)Y_S+\eta_1(Y_T-Y_S)}\bigr] < +\infty
\qquad\text{and}\qquad
\EE\bigl[e^{-(1-\eta_2)Y_S-(1+\eta_2)(Y_T-Y_S)}\bigr] < +\infty.
\ee
Moreover, the constants $\eta_1$ and $\eta_2$ do not depend on $S$.
\end{lem}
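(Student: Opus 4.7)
My overall plan is to reduce both moments to hypotheses of Proposition \ref{prop:ext_Fourier} by conditioning on $\cF_S$ and invoking the affine transform formula. I will repeatedly use: (i) openness of $\cY_T$; (ii) $\cY_T\subseteq\cY_S$ for every $S\in[0,T]$, since $T_+(u,v)>T$ implies $T_+(u,v)>S$; (iii) the identities $R(0)=F(0)=0$ read off from \eqref{eq:LevyKhint}, which give $\Psi(\tau,0,0)\equiv 0$ and thus $(0,0)\in\cY_T$. Fix balls $B_\rho(0,-1),B_{\rho'}(0,0)\subseteq\cY_T$. Since $(\tau,v)\mapsto\Psi(\tau,0,v)$ is jointly continuous on $[0,T]\times V$ for a suitable neighborhood $V\subseteq\R$ of $\{-1,0\}$, small perturbations of $v$ away from $0$ or $-1$ keep $\Psi(\tau,0,v)$ arbitrarily close to $\Psi(\tau,0,v_0)$ uniformly in $\tau\in[0,T]$.

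For the first inequality I condition on $\cF_S$. Choosing $\eta_1>0$ so that $(0,\eta_1)\in B_{\rho'}(0,0)\subseteq\cY_T\subseteq\cY_{T-S}$, Proposition \ref{prop:ext_Fourier} yields
\[
\EE\bigl[e^{-(1+\eta_1)Y_S+\eta_1(Y_T-Y_S)}\bigr]
= e^{\Phi(T-S,0,\eta_1)}\,\EE\bigl[e^{-(1+\eta_1)Y_S+\langle\Psi(T-S,0,\eta_1),X_S\rangle}\bigr],
\]
which is finite as soon as $(\Psi(T-S,0,\eta_1),-(1+\eta_1))\in\cY_S$. At $\eta_1=0$ the pair equals $(0,-1)\in\cY_T$; by the uniform continuity just recalled, shrinking $\eta_1$ keeps the pair inside $B_\rho(0,-1)\subseteq\cY_T\subseteq\cY_S$ for every $S\in[0,T]$ simultaneously, producing an $\eta_1$ depending only on $T$.

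For the second inequality the same direct conditioning would produce the base point $(\Psi(T-S,0,-1),-1)$; although the semiflow relation places it in $\cY_S$, the dependence on $S$ makes it awkward to extract an open neighborhood of uniform radius in $S$. To bypass this I use the identity
\[
-(1-\eta_2)Y_S-(1+\eta_2)(Y_T-Y_S) = -Y_T+\eta_2(2Y_S-Y_T)
\]
and apply H\"older's inequality with $p\in(1,1+\rho)$ and $q:=p/(p-1)$:
\[
\EE\bigl[e^{-(1-\eta_2)Y_S-(1+\eta_2)(Y_T-Y_S)}\bigr]
\leq \EE\bigl[e^{-pY_T}\bigr]^{1/p}\,\EE\bigl[e^{\eta_2 q(2Y_S-Y_T)}\bigr]^{1/q}.
\]
The first factor is finite since $(0,-p)\in B_\rho(0,-1)\subseteq\cY_T$. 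For the second, I write $2Y_S-Y_T=Y_S-(Y_T-Y_S)$ and repeat the conditioning of the first part: finiteness reduces to $(\Psi(T-S,0,-\eta_2 q),\eta_2 q)\in\cY_S$, and, $p$ (hence $q$) being fixed first, choosing $\eta_2$ small enough forces this pair into $B_{\rho'}(0,0)\subseteq\cY_T\subseteq\cY_S$ uniformly in $S\in[0,T]$. The main obstacle is precisely the one just isolated: the naive reduction for the second bound centers at the $S$-dependent point $(\Psi(T-S,0,-1),-1)$, which resists a uniform-in-$S$ neighborhood argument, and the H\"older decomposition resolves this by re-centering the delicate factor at $(0,0)$, where $\cY_T$ is open uniformly in $S$.
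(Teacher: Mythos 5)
Your proof is correct, but it takes a genuinely different route from the paper's. The paper never conditions on $\cF_S$: it rewrites the exponents as $-(1+2\eta_1)Y_S+\eta_1 Y_T$ and $2\eta_2 Y_S-(1+\eta_2)Y_T$ and applies H\"older's inequality once, with exponents tuned so that both factors become \emph{unconditional} moments --- $\EE[e^{-(1+\varepsilon)Y_S}]$ and $\EE[e^{\varepsilon Y_T}]$ for the first bound, $\EE[e^{\varepsilon Y_S}]$ and $\EE[e^{-(1+\varepsilon)Y_T}]$ for the second --- where $\varepsilon$ is fixed once so that $(0,-1-\varepsilon)$ and $(0,\varepsilon)$ lie in $\cY_T$ (openness of $\cY_T$ together with $(0,-1),(0,0)\in\cY_T$); finiteness is then immediate from Proposition \ref{prop:ext_Fourier}-(i), and $S$-independence of $\eta_1=\varepsilon^2/(1+3\varepsilon)$, $\eta_2=\varepsilon^2/(2+3\varepsilon)$ is automatic. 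You instead condition at time $S$ via Proposition \ref{prop:ext_Fourier}-(ii) and then need $(\Psi(T-S,0,\eta_1),-(1+\eta_1))\in\cY_S$, respectively $(\Psi(T-S,0,-\eta_2 q),\eta_2 q)\in\cY_S$, uniformly in $S$, which you get from continuous dependence of the Riccati solution on the parameter $v$ near $v=0$, where $\Psi(\cdot,0,0)\equiv 0$ since $R(0)=0$. That step is legitimate (the extended $R$ is locally Lipschitz on $S(\cY^{\circ})$ and the base solution stays in the open set $\cY^{\circ}$ on a compact time interval, so uniform-in-$\tau$ continuity in $v$ holds; to get the strict inequality $T_+>S$ you should run the argument on $[0,T+\delta]$, a harmless adjustment), but it is an extra ODE ingredient that the paper's purely algebraic H\"older trick avoids entirely, and your assertion of continuity near $v=-1$ is never actually used. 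Your H\"older re-centering of the second bound at $(0,0)$ is a valid fix for the asymmetry you correctly identify; the paper sidesteps that asymmetry by not conditioning at all. What your route buys is slightly stronger intermediate information of the form $(\Psi(T-S,0,\cdot),\cdot)\in\cY_S$, close in spirit to what the proof of Theorem \ref{thm:backward} later has to extract from \eqref{eq:Holder} by other means; what the paper's route buys is brevity and minimal machinery.
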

\begin{proof}
Since $\cY_T$ is open and $(0,-1)\in\cY_T$, due to Assumption \ref{ass:bond}, there exists $\varepsilon>0$ such that $(0,-1-\varepsilon)\in\cY_T$. Moreover, since $(0,0)\in\cY_T$ always holds, we can assume that $(0,\varepsilon)\in\cY_T$, for $\varepsilon$ small enough. Taking $\eta_1=\varepsilon^2/(1+3\varepsilon)$, H\"older's inequality with $p=(1+3\varepsilon)/(1+2\varepsilon)$ and $q=p/(p-1)$ implies that
\begin{align*}
\EE\bigl[e^{-(1+\eta_1)Y_S+\eta_1(Y_T-Y_S)}\bigr] 
&= \EE\bigl[e^{-(1+2\eta_1)Y_S+\eta_1 Y_T}\bigr] \\
&\leq  \EE\bigl[e^{-p(1+2\eta_1)Y_S}\bigr]^{\frac{1}{p}}
\EE\bigl[e^{q\eta_1Y_T}\bigr]^{\frac{1}{q}}
= \EE\bigl[e^{-(1+\varepsilon)Y_S}\bigr]^{\frac{1}{p}}
\EE\bigl[e^{\varepsilon Y_T}\bigr]^{\frac{1}{p}} < +\infty,
\end{align*}
as a consequence of part (i) of Proposition \ref{prop:ext_Fourier}, together with the fact that $\cY_S\subseteq\cY_T$, for every $S\in[0,T]$.
Similarly, taking $\eta_2=\varepsilon^2/(2+3\varepsilon)$, $p=(2+3\varepsilon)/(2\varepsilon)$ and $q=p/(p-1)$, we have that
\begin{align*}
\EE\bigl[e^{-(1-\eta_2)Y_S-(1+\eta_2)(Y_T-Y_S)}\bigr] 
&= \EE\bigl[e^{2\eta_2Y_S-(1+\eta_2)Y_T}\bigr] \\
&\leq  \EE\bigl[e^{2p\eta_2Y_S}\bigr]^{\frac{1}{p}}
\EE\bigl[e^{-q(1+\eta_2)Y_T}\bigr]^{\frac{1}{q}}
= \EE\bigl[e^{\varepsilon Y_S}\bigr]^{\frac{1}{p}}
\EE\bigl[e^{-(1+\varepsilon)Y_T}\bigr]^{\frac{1}{p}} < +\infty,
\end{align*}
again by part (i) of Proposition \ref{prop:ext_Fourier}. 
\end{proof}

\begin{rem}
As a consequence of Lemma \ref{lem:Holder}, the moment generating function $\EE[\exp(u Y_T)]$ is well-defined for all $u\in[-(1+\eta_2),\eta_1]$, for suitable constants $\eta_1$ and $\eta_2$ (depending on $T$). 
This avoids the potential explosion of the moment generating function, which is a well-known phenomenon for some non-Gaussian affine processes (see, e.g.,  \cite[Corollary 3.3]{AP07} when $X$ is a square-root diffusion). 
The constants $\eta_1$ and $\eta_2$ can be explicitly determined by relying on \cite[Theorem 3.4]{FGSz22} when $X$ is an affine process taking values in $\R_+$ (CBI process).
\end{rem}

By relying on Lemma \ref{lem:Holder}, we are now in a position to state the following theorem, which provides a general Fourier-based pricing formula for backward-looking caplets/floorlets.
In the next theorem, we consider the case of valuation before the accrual period (i.e., for $t\leq S$), referring to Remark \ref{rem:inside_accrual} for the valuation inside the accrual period (i.e., for $t\in[S,T]$). 

\begin{thm}	\label{thm:backward}
There exist constants $w_-<0$ and $w_+>1$ such that the prices of backward-looking caplets and floorlets can be expressed respectively as follows, for all $w\in(w_-,w_+)\setminus\{0,1\}$:
\begin{align}
\PicB_t(S,T,K) &= \begin{cases}
\PiB_t(w),&\text{ for }w\in(w_-,0),	\\
\PiB_t(w)+P_t(S),&\text{ for }w\in(0,1),
\end{cases}
\label{eq:bck_caplet_Fourier}\\
\PifB_t(S,T,K) &= \begin{cases}
\PiB_t(w),&\text{ for }w\in(1,w_+),	\\
\PiB_t(w)+K'P_t(T),&\text{ for }w\in(0,1),
\end{cases}
\label{eq:bck_floorlet_Fourier}
\end{align}
for all $t\in[0,S]$, where
\[
\PiB_t(w) := \int_{\R}e^{A^0(S,T,w+\im\lambda)+A^1(t,S,B^0(T-S,w+\im\lambda))+\langle B^1(S-t,B^0(T-S,w+\im\lambda)),X_t\rangle}h^K(w,\lambda)\ud\lambda.
\]
\end{thm}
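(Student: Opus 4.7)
The proof mirrors that of Theorem \ref{thm:forward} but requires new ingredients, since the payoff $(B_T/B_S - K')^+$ is $\cF_T$-measurable rather than $\cF_S$-measurable. The first step is the algebraic rewriting
\[
\frac{B_t}{B_T}\bigl(B_T/B_S - K'\bigr)^+ = \frac{B_t}{B_S}\bigl(1 - K' Z\bigr)^+,
\]
with $Z := B_S/B_T = \exp(-L(S,T) - (Y_T - Y_S))$, which places the backward-looking caplet in the same structural form as \eqref{eq:fwd_proof}, with $Z$ playing the role of $P_S(T)$. Applying the same Fourier decomposition of $(1 - K'z)^+$ used in the proof of Theorem \ref{thm:forward} then reduces the task to computing $\EE[\frac{B_t}{B_S} Z^{w+\im\lambda}|\cF_t]$ and justifying a Fubini interchange.

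For Fubini I would need $\EE[\frac{B_t}{B_S} Z^w] < +\infty$ for real $w$ in an open interval around $[0,1]$. Up to deterministic factors this reduces to finiteness of $\EE[e^{-Y_S - w(Y_T - Y_S)}]$, and Lemma \ref{lem:Holder} is designed to provide exactly this: the $\eta_1$-estimate covers $w$ slightly below $0$, the $\eta_2$-estimate covers $w$ slightly above $1$, and the interior range $w\in(0,1)$ follows by H\"older interpolation between the endpoints. Combined with the $\R$-integrability of $\lambda \mapsto h(w,\lambda)$, this validates the exchange of expectation and $\lambda$-integral.

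The inner expectation is then evaluated by iterated application of Proposition \ref{prop:ext_Fourier}(ii). Conditioning first on $\cF_S$ with parameters $(0, -(w+\im\lambda))$ over the time horizon $T-S$ yields $\exp(A^0(S,T,w+\im\lambda) + \langle B^0(T-S, w+\im\lambda), X_S\rangle)$. Conditioning next on $\cF_t$ with parameters $(B^0(T-S, w+\im\lambda), -1)$ over the time horizon $S-t$, and incorporating $B_t/B_S = \exp(-L(t,S) - (Y_S - Y_t))$, produces the integrand of $\PiB_t(w)$ after unpacking the definitions of $A^0, A^1, B^0, B^1$.

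The main technical obstacle is verifying that the parameters lie in the appropriate complex domains. The first application is easy: $(0,-w) \in \cY_{T-S}$ for $w$ in a neighborhood of $[0,1]$, since $\cY_{T-S}$ is open and convex and contains both $(0,0)$ and $(0,-1)$ by Assumption \ref{ass:bond}. The second application requires $(\real(B^0(T-S, w+\im\lambda)), -1) \in \cY_{S-t}$ \emph{uniformly} in $\lambda \in \R$, which is the delicate point. To this end, the modulus inequality for affine characteristic functions gives $\real(B^0_I(T-S,w+\im\lambda)) \preceq B^0_I(T-S, w)$, while $\real(B^0_J(T-S,w+\im\lambda)) = B^0_J(T-S, w)$ by the linearity in Remark \ref{rem:structure_psi}; Lemma \ref{lem:Y_t} then reduces the claim to $(B^0(T-S, w), -1) \in \cY_{S-t}$ for real $w$. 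Next, Lemma \ref{lem:convex} applied to the pair $(0,-1),(0,0) \in \cY_{T-S}$ gives $B^0_I(T-S, w) \preceq w\,B^0_I(T-S, 1)$, while linearity in the $J$-components yields equality; a further application of Lemma \ref{lem:Y_t} brings the problem down to $(w B^0(T-S,1), -1) \in \cY_{S-t}$ for $w \in [0,1]$. The latter follows from convexity of $\cY_{S-t}$, since $(0,-1) \in \cY_{S-t}$ by Assumption \ref{ass:bond} and $(B^0(T-S, 1), -1) \in \cY_{S-t}$ via the semiflow identity $\Psi(S-t, B^0(T-S,1), -1) = \Psi(T-t, 0, -1)$. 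Openness of $\cY_{S-t}$ then extends the admissible range to an open interval $(w_-, w_+) \supset [0,1]$. Finally, the constants $P_t(S)$ and $K'P_t(T)$ in \eqref{eq:bck_caplet_Fourier}--\eqref{eq:bck_floorlet_Fourier} arise from matching the constant terms in the Fourier decompositions across the different $w$-ranges, and the floorlet formula is obtained analogously by decomposing $(K'z - 1)^+$.
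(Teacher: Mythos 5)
Your overall architecture coincides with the paper's: the rewriting of the payoff as $\frac{B_t}{B_S}(1-K'B_S/B_T)^+$, the Fourier decomposition via Lemma \ref{lem:Fourier_payoffs}, the Fubini interchange, the two-stage application of \eqref{eq:ext_Fourier}, and the passage from real $w$ to $w+\im\lambda$ through the modulus inequality and Lemma \ref{lem:Y_t} are all as in the paper, and your treatment of the range $w\in(0,1)$ is correct. The gap lies in how you reach some $w<0$ and some $w>1$, which is exactly what the caplet formula on $(w_-,0)$ and the floorlet formula on $(1,w_+)$ require. First, Lemma \ref{lem:Holder} does not ``provide exactly'' the Fubini bound $\EE[e^{-Y_S-w(Y_T-Y_S)}]<+\infty$ at $w=-\eta_1$ or $w=1+\eta_2$: its exponents are $-(1+\eta_1)Y_S+\eta_1(Y_T-Y_S)$ and $-(1-\eta_2)Y_S-(1+\eta_2)(Y_T-Y_S)$, whose coefficient on $Y_S$ is not $-1$, and since $Y_S$ is signed neither bound implies the one you need. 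For $w<0$ this can be repaired by a further H\"older interpolation with the trivial moment, but for $w>1$ the required bound is not obtainable as a convex combination of the moments in the statement of Lemma \ref{lem:Holder} together with those granted by Assumption \ref{ass:bond}; you would have to return to $(0,-(1+\varepsilon))\in\cY_T$, i.e.\ essentially redo the lemma's proof in the form you need, or (as the paper does) deduce the integrability from the domain membership via Proposition \ref{prop:ext_Fourier}-(i).

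Second, and more importantly, your domain argument only establishes $(B^0(T-S,w),-1)\in\cY_S$ for $w\in[0,1]$: the comparison $B^0_I(T-S,w)\preceq w\,B^0_I(T-S,1)$ comes from Lemma \ref{lem:convex} and is a convex-combination statement, hence unavailable for $w\notin[0,1]$, and your concluding step ``openness of $\cY_{S-t}$ extends the admissible range to an open interval containing $[0,1]$'' does not follow: openness enlarges the set of $w$ for which the segment point $(wB^0(T-S,1),-1)$ lies in the domain, not the set of $w$ for which $(B^0(T-S,w),-1)$ itself does. To make your route rigorous you would additionally need that $w\mapsto B^0(T-S,w)$ is well defined and continuous on a neighbourhood of $[0,1]$ (continuous dependence of the Riccati solution on $v$), which you never state or prove. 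The paper avoids this by a different mechanism: it uses \eqref{eq:Holder} together with \cite[Theorem 2.14-(a)]{krm12} to obtain solvability of \eqref{eq:Riccati1}--\eqref{eq:Riccati2} starting from $(\Psi(T-S,0,\eta_1),-(1+\eta_1))$ and $(\Psi(T-S,0,-(1+\eta_2)),-(1-\eta_2))$, then the semiflow property \eqref{eq:semiflow} and convexity of $\cY_S$ to obtain the interior memberships with second coordinate $-1$, yielding the explicit endpoints $w_-=-\eta_1$ and $w_+=1+\eta_2$, and only afterwards invokes Lemmas \ref{lem:Y_t} and \ref{lem:convex} for intermediate $w$ and the complex shift. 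As written, your proposal establishes the $(0,1)$-cases of \eqref{eq:bck_caplet_Fourier}--\eqref{eq:bck_floorlet_Fourier}, but not the existence of $w_-<0$ and $w_+>1$, which is the delicate point the theorem is about.
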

\begin{proof}
We first show that there exist constants $w_-<0$ and $w_+>1$ such that
\be	\label{eq:w_backward}
(B^0(T-S,w_-),-1)\in\cY_S
\qquad\text{ and }\qquad
(B^0(T-S,w_+),-1)\in\cY_S.
\ee
By \eqref{eq:Holder} together with \cite[Theorem 2.14-(a)]{krm12}, the ODEs \eqref{eq:Riccati1}-\eqref{eq:Riccati2} admit a solution up to time $S$ starting from $(u,v)=(\Psi(T-S,0,\eta_1),-(1+\eta_1))$ (however, note that it may happen that $(\Psi(T-S,0,\eta_1),-(1+\eta_1))\in\partial\cY_S$). In addition, since $(0,\eta_1)\in\cY_T$ by Lemma \ref{lem:Holder}, the semiflow property \eqref{eq:semiflow} implies that $(\Psi(T-S,0,\eta_1),\eta_1)\in\cY_S$. Let $\gamma:=\eta_1/(1+2\eta_1)\in(0,1)$. Noting that $\gamma\eta_1-(1-\gamma)(1+\eta_1)=-1$, the convexity of $\cY_S$ implies that
\[
(\Psi(T-S,0,\eta_1),-1) = \gamma(\Psi(T-S,0,\eta_1),\eta_1)+(1-\gamma)(\Psi(T-S,0,\eta_1),-(1+\eta_1)) \in\cY_S.
\]
Letting $w_-:=-\eta_1$ proves the first property in  \eqref{eq:w_backward}. Arguing in a similar way, \eqref{eq:Holder} implies the existence of a solution up to time $S$ of the ODEs \eqref{eq:Riccati1}-\eqref{eq:Riccati2} starting from $(u,v)=(\Psi(T-S,0,-(1+\eta_2)),-(1-\eta_2))$. Moreover, it holds that $(\Psi(T-S,0,-(1+\eta_2)),-(1+\eta_2))\in\cY_S$ and convexity of $\cY_S$ then implies that $(\Psi(T-S,0,-(1+\eta_2)),-1)\in\cY_S$, thus proving the second property in \eqref{eq:w_backward} with $w_+:=1+\eta_2$.
Moreover, Lemma \ref{lem:Y_t}  and Lemma \ref{lem:convex} together imply that $(B^0(T-S,w),-1)\in\cY_S$ for all $w\in(w_-,w_+)$.
By Proposition \ref{prop:ext_Fourier}, it follows that
\be	\label{eq:bck_Tonelli}	\begin{aligned}
&\EE\left[\frac{1}{B_S}\int_{\R}\biggl|\left(\frac{B_S}{B_T}\right)^{w+\im\lambda}h^K(w,\lambda)\biggr|\ud\lambda\right]	\\
&\leq \EE\left[\frac{1}{B_S}\left(\frac{B_S}{B_T}\right)^w\right]\int_{\R}|h^K(w,\lambda)|\ud\lambda	\\
&= \EE\left[e^{-Y_S-L(0,S)+A^0(S,T,w)+\langle B^0(T-S,w),X_S\rangle}\right]\int_{\R}|h^K(w,\lambda)|\ud\lambda <+\infty,
\end{aligned}	\ee
for all $w\in(w_-,w_+)\setminus\{0,1\}$.
We can therefore apply Fubini's theorem and obtain
\begin{align}
& \EE\left[\frac{B_t}{B_S}\int_{\R}\left(\frac{B_S}{B_T}\right)^{w+\im\lambda}h^K(w,\lambda)\ud\lambda\bigg|\cF_t\right]	\notag\\
&= \int_{\R}\EE\left[\frac{B_t}{B_S}\left(\frac{B_S}{B_T}\right)^{w+\im\lambda}\bigg|\cF_t\right]h^K(w,\lambda)\ud\lambda	
\label{eq:bck_Fubini}\\
&= \int_{\R}\EE\left[e^{-(Y_S-Y_t)-L(t,S)+A^0(S,T,w+\im\lambda)+\langle B^0(T-S,w+\im\lambda),X_S\rangle}\Big|\cF_t\right]h^K(w,\lambda)\ud\lambda
=\PiB_t(w),
\notag
\end{align}
where the second equality follows from formula \eqref{eq:ext_Fourier} using that $(0,-(w+\im\lambda))\in\cD_{T-S}\subseteq\cD_T$, for every $w\in(w_-,w_+)$, which is in turn a consequence of $(0,-w_-)\in\cY_T$ and $(0,-w_+)\in\cY_T$ together with the convexity of $\cY_T$. To justify the last equality in \eqref{eq:bck_Fubini}, note that, in view of Remark \ref{rem:structure_psi} and arguing as in the proof of \cite[Proposition 5.1]{krm12}, it holds that
\[
\real(B^0_I(T-S,w+\im\lambda))\preceq B^0_I(T-S,w)
\qquad\text{and}\qquad
\real(B^0_J(T-S,w+\im\lambda))=B^0_J(T-S,w).
\]
Recalling that $(B^0(T-S,w),-1)\in\cY_S$ for all $w\in(w_-,w_+)$, Lemma \ref{lem:Y_t} therefore implies that $(\real(B^0(T-S,w+\im\lambda)),-1)\in\cY_S$. The last equality in \eqref{eq:bck_Fubini} then follows by applying formula \eqref{eq:ext_Fourier} with $(u,v)=(B^0(T-S,w+\im\lambda),-1)$.
Formula \eqref{eq:bck_caplet_Fourier} follows by noting that
\be	\label{eq:bck_proof}
\PicB_t(S,T,K)
= \EE\biggl[\frac{B_t}{B_T}\left(\frac{B_T}{B_S}-K'\right)^+\bigg|\cF_t\biggr]
= \EE\left[\frac{B_t}{B_S}\left(1-K'\frac{B_S}{B_T}\right)^+\bigg|\cF_t\right]
\ee
and applying Lemma \ref{lem:Fourier_payoffs} to the last term on the right-hand side of \eqref{eq:bck_proof}, making use of \eqref{eq:bck_Fubini}. 
In the case of a backward-looking floorlet, formula \eqref{eq:bck_floorlet_Fourier} can be proved in an analogous way.
\end{proof}

\begin{rem}	\label{rem:inside_accrual}
Since the payoff of backward-looking caplets/floorlets is $\cF_T$-measurable and not $\cF_S$-measurable, one may also consider the valuation of such products inside the accrual period $[S,T]$. In this case,  the price of a backward-looking caplet  for $t\in[S,T]$ can be expressed as 
\be	\label{eq:inside_period}	\begin{aligned}
\PicB_t(S,T,K)
&= \EE\biggl[\left(\frac{B_t}{B_S}-K'\frac{B_t}{B_T}\right)^+\bigg|\cF_t\biggr]	\\
&= \int_{\R}\EE\biggl[\left(\frac{B_t}{B_T}\right)^{w+\im\lambda}\bigg|\cF_t\biggr]\left(\frac{B_t}{B_S}\right)^{-(w-1+\im\lambda)}h^K(w,\lambda)\ud\lambda \\
&= \int_{\R}e^{A^0(t,T,w+\im\lambda)+\langle B^0(T-t,w+\im\lambda),X_t\rangle}
\left(\frac{B_t}{B_S}\right)^{-(w-1+\im\lambda)}h^K(w,\lambda)\ud\lambda,
\end{aligned}	\ee
for any $w\in(w_-,0)$.
The application of Fubini's theorem in \eqref{eq:inside_period} is justified by \eqref{eq:bck_Tonelli}, while the last equality follows from formula \eqref{eq:ext_Fourier} using the fact that $(0,-w-\im\lambda)\in\cD_T$ for all $w\in(w_-,w_+)$, as shown in the proof of Theorem \ref{thm:backward}.
\end{rem}

As pointed out in \cite{LM19}, in a market where forward-looking and backward-looking rates are present, one may also consider a {\em term-basis caplet}, corresponding to an exchange option between forward-looking and backward looking rates.\footnote{A term-basis caplet can also be viewed as a forward-start option (see \cite[Section 6.2]{MR05}) on the backward-looking rate, with strike set equal to the forward-looking rate at the beginning of the accrual period.} 
The payoff of a term-basis caplet is given by $(T-S)(R(S,T)-F(S,T))^+$ and the corresponding arbitrage-free price is
\[
\pi^{c,{\rm TB}}_t(S,T) 
= (T-S)\EE\left[\frac{B_t}{B_T}\bigl(R(S,T)-F(S,T)\bigr)^+\bigg|\cF_t\right],
\qquad\text{ for }t\in[0,T].
\]
Observe that the price of a term-basis caplet equals the price of the corresponding term-basis floorlet for all $t\in[0,S]$. This fact is a direct consequence of the definition of forward-looking rate. On the other hand, for $t\in[S,T]$, term-basis caplets/floorlets reduce to backward-looking caplets/floorlets that can be priced as in Remark \ref{rem:inside_accrual}. In the following, we therefore restrict our attention to the valuation of term-basis caplets on $[0,S]$.
As a preliminary, let us define
\[
k(w,\lambda) := \frac{1}{2\pi}\frac{1}{(w+\im\lambda)(w-1+\im\lambda)},
\qquad\text{ for all $w\in\R\setminus\{0,1\}$ and $\lambda\in\R$}.
\]
Similarly as above, the next theorem relies on a Fourier representation of the payoff of a term-basis caplet. The proof requires a specific analysis of the integrability properties of the terms appearing in the representation and of the applicability of the affine transform formula \eqref{eq:ext_Fourier}.

\begin{thm}	\label{thm:term_basis}
There exists a constant $w_-<0$ such that the price of a term-basis caplet can be expressed as follows, for all $w\in(w_-,1)\setminus\{0\}$:
\be	\label{eq:term_basis}
\pi^{c,{\rm TB}}_t(S,T)  = \begin{cases}
\Pi_t^{{\rm TB}}(w),&\text{ for }w\in(w_-,0),	\\
\Pi_t^{{\rm TB}}(w)+P_t(S),&\text{ for }w\in(0,1),
\end{cases}
\ee
for all $t\in[0,S]$, where
\begin{align*}
\Pi_t^{{\rm TB}}(w) &:= \int_{\R}e^{A^1(t,S,B^0(T-S,w+\im\lambda)-(w+\im\lambda)B^0(T-S,1))+A^0(S,T,w+\im\lambda)-(w+\im\lambda)A^0(S,T,1)}\\
&\qquad\quad\times e^{\langle B^1(S-t,B^0(T-S,w+\im\lambda)-(w+\im\lambda)B^0(T-S,1)),X_t\rangle}k(w,\lambda)\ud\lambda.
\end{align*}
\end{thm}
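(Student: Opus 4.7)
The plan is to follow the template of the proof of Theorem \ref{thm:backward}, adapted to the fact that the ``strike'' is now the $\cF_S$-measurable random variable $1/P_S(T)$ rather than a deterministic constant $K'$. First, \eqref{eq:fwd_rate} gives $1+(T-S)F(S,T)=1/P_S(T)$, so that the payoff rewrites as $(T-S)(R(S,T)-F(S,T))^+=(B_T/B_S-1/P_S(T))^+$; for $t\in[0,S]$ this yields
\[
\pi^{c,{\rm TB}}_t(S,T) = \EE\!\left[\frac{B_t}{B_S}\left(1-\frac{B_S/B_T}{P_S(T)}\right)^+\bigg|\cF_t\right],
\]
which has the structure of \eqref{eq:bck_proof} with $K'$ replaced by $1/P_S(T)$, explaining why the Fourier weight $k(w,\lambda)$ carries no $(K')^{w+\im\lambda}$ factor.

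Second, setting $Z:=(B_S/B_T)/P_S(T)$ and $\tilde u_v:=B^0(T-S,v)-vB^0(T-S,1)$, I would establish $\EE[(B_t/B_S)Z^w]<+\infty$ for $w\in(w_-,1)\setminus\{0\}$ for some $w_-<0$. Note $\tilde u_0=0$, since $R(0)=0$ together with Riccati uniqueness give $B^0(T-S,0)=\Psi(T-S,0,0)=0$. For $w\in(0,1)$, Jensen's inequality applied conditionally on $\cF_S$ (using concavity of $x\mapsto x^w$ and $\EE[B_S/B_T|\cF_S]=P_S(T)$) yields $\EE[Z^w|\cF_S]\leq 1$, hence $\EE[(B_t/B_S)Z^w]\leq P_t(S)<+\infty$. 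For slightly negative $w$, openness of $\cY_{S-t}$ and continuity of $w\mapsto\tilde u_w$ produce $w_-<0$ such that $(\tilde u_w,-1)\in\cY_{S-t}$ for every $w\in(w_-,0)$; the analogue of \eqref{eq:bck_Tonelli} combined with Proposition \ref{prop:ext_Fourier}-(i) then gives the bound, which together with $\int_\R|k(w,\lambda)|\,\ud\lambda<+\infty$ justifies Fubini's theorem for the Fourier integral.

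Third, I would compute $\EE[(B_t/B_S)Z^{w+\im\lambda}|\cF_t]$ by iterated conditioning. Conditioning first on $\cF_S$, Proposition \ref{prop:ext_Fourier}-(ii) applied with $(u,v)=(0,-(w+\im\lambda))$ gives $\EE[(B_S/B_T)^{w+\im\lambda}|\cF_S]= e^{A^0(S,T,w+\im\lambda)+\langle B^0(T-S,w+\im\lambda), X_S\rangle}$; this is admissible because $(0,-w)\in\cY_{T-S}$ for every $w\in(w_-,1)$, by convexity of $\cY_{T-S}$ around the endpoints $(0,0)$ and $(0,-1)$. Substituting the closed form \eqref{eq:ZCB_affine} of $P_S(T)^{-(w+\im\lambda)}$ produces an exponential of $\langle\tilde u_{w+\im\lambda},X_S\rangle$ that is then evaluated under $\cF_t$ by a second application of Proposition \ref{prop:ext_Fourier}-(ii) with $(u,v)=(\tilde u_{w+\im\lambda},-1)$, yielding the integrand of $\Pi_t^{{\rm TB}}(w)$.

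The main obstacle, as in Theorem \ref{thm:backward}, is verifying the admissibility $(\tilde u_{w+\im\lambda},-1)\in\cD_{S-t}$ needed for this last step. Remark \ref{rem:structure_psi} shows that $\Psi_J$ is linear in $(u_J,v)$ and independent of $u_I$, so $\tilde u^J_{w+\im\lambda}=\tilde u^J_w$; Lemma \ref{lem:convex} together with the same estimate used in the proof of Theorem \ref{thm:backward} yields $\real(\tilde u^I_{w+\im\lambda})\preceq\tilde u^I_w$, whence Lemma \ref{lem:Y_t} transfers admissibility from $(\tilde u_w,-1)$ to $(\real(\tilde u_{w+\im\lambda}),-1)$. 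Finally, inserting the resulting identity into the Fourier representation of $(1-z)^+$ provided by Lemma \ref{lem:Fourier_payoffs} concludes the proof of \eqref{eq:term_basis}: the Fourier integral matches $\Pi_t^{{\rm TB}}(w)$, no residue correction arises for $w\in(w_-,0)$, while for $w\in(0,1)$ the correction is the constant $1$, whose conditional expectation against the weight $B_t/B_S$ equals $P_t(S)$.
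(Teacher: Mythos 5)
Your overall architecture coincides with the paper's: rewrite the payoff as $\frac{B_t}{B_S}(1-Z)^+$ with $Z:=(B_S/B_T)/P_S(T)$, apply the Fourier representation of Lemma \ref{lem:Fourier_payoffs} (with strike $1$, hence the kernel $k(w,\lambda)$), interchange integration and conditional expectation, and evaluate $\EE[(B_t/B_S)Z^{w+\im\lambda}|\cF_t]$ by the tower property and two applications of \eqref{eq:ext_Fourier}. Two of your deviations are legitimate: the conditional Jensen bound $\EE[Z^w|\cF_S]\leq1$ for $w\in(0,1)$ replaces the paper's convexity-in-$w$ argument for the Fubini estimate, and your openness/continuity argument at $(\tilde u_0,-1)=(0,-1)\in\cY_S$ (where $\tilde u_w:=B^0(T-S,w)-wB^0(T-S,1)$) is a softer substitute for the paper's route to a negative $w_-$, which goes through H\"older with the constant $\eta_1$ of Lemma \ref{lem:Holder}, the existence result of \cite{krm12}, and a scaling/convexity argument to move off a possible boundary point; since the theorem only asserts existence of some $w_-<0$, your argument suffices, whereas the paper's produces the explicit value $w_-=-\eta_1/2$. (Base the neighbourhood on $\cY_S$ rather than $\cY_{S-t}$ so that a single $w_-$ serves all $t\in[0,S]$, and note that $(0,-w)\in\cY_{T-S}$ for $w<0$ follows from openness of $\cY_{T-S}$ at $(0,0)$, not from convexity between $(0,0)$ and $(0,-1)$.)

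The genuine gap is the admissibility step for $w\in(0,1)$. Your final application of \eqref{eq:ext_Fourier} requires $(\real(\tilde u_{w+\im\lambda}),-1)$ to be admissible over the horizon $S-t$, and you obtain this by letting Lemma \ref{lem:Y_t} ``transfer admissibility from $(\tilde u_w,-1)$''. But you have established $(\tilde u_w,-1)\in\cY_{S-t}$ only for $w\in(w_-,0)$; for $w\in(0,1)$ the Jensen bound gives integrability of $B_S^{-1}Z^w$, which at best yields (via \cite[Theorem 2.14]{krm12}) a Riccati solution up to time $S$ that may lie on $\partial\cY_S$ --- precisely the boundary issue the paper flags --- whereas Lemma \ref{lem:Y_t} needs its dominating point to lie in the open set $\cY_{S-t}$. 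The missing ingredient is the paper's use of Lemma \ref{lem:convex} along the segment joining $(0,0)$ and $(0,-1)$: it gives $B^0_I(T-S,w)\preceq wB^0_I(T-S,1)$, i.e.\ $\tilde u^I_w\preceq0$, while $\tilde u^J_w=0$ by Remark \ref{rem:structure_psi}; Lemma \ref{lem:Y_t} with dominating point $(0,-1)\in\cY_S$ (Assumption \ref{ass:bond}) then yields $(\tilde u_w,-1)\in\cY_S$, after which your transfer to the complex point $(\tilde u_{w+\im\lambda},-1)$ goes through. With this insertion your argument is complete and otherwise matches the paper's in substance.
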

\begin{proof}
Observe that the price $p^{c,{\rm TB}}_t(S,T)$ of a term-basis caplet can be expressed as follows:
\be	\label{eq:TB_pricing}
\pi^{c,{\rm TB}}_t(S,T) 
= \EE\biggl[\frac{B_t}{B_T}\left(\frac{B_T}{B_S}-\frac{1}{P_S(T)}\right)^+\bigg|\cF_t\biggr]	
= \EE\biggl[\frac{B_t}{B_S}\left(1-\frac{B_S}{B_TP_S(T)}\right)^+\bigg|\cF_t\biggr].
\ee
In order to apply Lemma \ref{lem:Fourier_payoffs}, we first note that
\be	\label{eq:TB_Tonelli}
\EE\biggl[\frac{1}{B_S}\int_{\R}\biggl|\left(\frac{B_S}{B_TP_S(T)}\right)^{w+\im\lambda}k(w,\lambda)\biggr|\ud\lambda\biggr]
\leq \EE\left[\frac{1}{B_S}\left(\frac{B_S}{B_TP_S(T)}\right)^w\right]\int_{\R}|k(w,\lambda)|\ud\lambda.
\ee
The expected value on the right-hand side of the above inequality is obviously finite for $w=0$ and $w=1$ and, therefore, for all $w\in(0,1)$ by convexity. 
Moreover, letting $\eta_1\in(0,1)$ be the constant appearing in Lemma \ref{lem:Holder},  an application of H\"older's inequality yields
\be	\label{eq:TB_Holder}
\EE\left[\frac{1}{B_S}\left(\frac{B_S}{B_TP_S(T)}\right)^{-\frac{\eta_1}{2}}\right]
\leq \EE\left[\frac{1}{B_S}\left(\frac{B_S}{B_T}\right)^{-\eta_1}\right]^{1/2}
\EE\left[\frac{1}{B_S}P_S(T)^{\eta_1}\right]^{1/2}.
\ee
Recall from the proof of Theorem \ref{thm:backward} that $(\Psi(T-S,0,\eta_1),-1)\in\cY_S$. Therefore, the first expectation on the right-hand side of \eqref{eq:TB_Holder} is finite-valued, while for the second we have
\[
\EE\left[\frac{1}{B_S}P_S(T)^{\eta_1}\right]
\leq \EE\left[\frac{1}{B_S}\bigl(\eta_1P_S(T)+1-\eta_1\bigr)\right]
= \eta_1P_0(T)+(1-\eta_1)P_0(S)<+\infty,
\]
where we have used the convexity of the function $\alpha\mapsto P_S(T)^\alpha$ and Assumption \ref{ass:bond}.
Therefore, taking $w_-:=-\eta_1/2$, we have that \eqref{eq:TB_Tonelli} is finite-valued for all $w\in(-w_-,1)\setminus\{0\}$.
We can therefore apply Fubini's theorem and obtain
\be	\label{eq:TB_Fubini}\begin{aligned}
&\EE\biggl[\frac{B_t}{B_S}\int_{\R}\left(\frac{B_S}{B_TP_S(T)}\right)^{w+\im\lambda}k(w,\lambda)\ud\lambda\bigg|\cF_t\biggr]	\\
& = \int_{\R}\EE\biggl[\frac{B_t}{B_S}\left(\frac{B_S}{B_TP_S(T)}\right)^{w+\im\lambda}\bigg|\cF_t\biggr]k(w,\lambda)\ud\lambda	\\
& = \int_{\R}\EE\left[\frac{B_t}{B_S}e^{A^0(S,T,w+\im\lambda)+\langle B^0(T-S,w+\im\lambda),X_S\rangle}P_S(T)^{-(w+\im\lambda)}\bigg|\cF_t\right]k(w,\lambda)\ud\lambda	\\
& = \int_{\R}e^{-L(t,S)+A^0(S,T,w+\im\lambda)-(w+\im\lambda)A^0(S,T,1)}\\
&\qquad\times\EE\left[e^{-(Y_S-Y_t)+\langle B^0(T-S,w+\im\lambda)-(w+\im\lambda)B^0(T-S,1),X_S\rangle}\Big|\cF_t\right]k(w,\lambda)\ud\lambda
= \Pi_t^{{\rm TB}}(w).
\end{aligned}\ee
In \eqref{eq:TB_Fubini}, the second equality follows from an application of formula \eqref{eq:ext_Fourier}, using the fact that $(0,-(w+\im\lambda))\in\cD_{T-S}\subseteq\cD_T$ for all $w\in(w_-,1)\setminus\{0\}$ (compare with the proof of Theorem \ref{thm:backward}). 
To justify the last equality in \eqref{eq:TB_Fubini}, note first that $B^0_J(T-S,w)-wB^0_J(T-S,1)=0$ (see Remark \ref{rem:structure_psi}). 
For all $w\in(0,1)$, Lemma \ref{lem:convex} implies that $B^0_I(T-S,w)-wB^0_I(T-S,1)\preceq0$. Lemma \ref{lem:Y_t} and Assumption \ref{ass:bond} then imply $(B^0(T-S,w)-wB^0(T-S,1),-1)\in\cY_S$, for all $w\in(0,1)$. Similarly as in the proof of \cite[Proposition 5.1]{krm12}, we have that $\real(B^0_I(T-S,w+\im\lambda))\preceq B^0_I(T-S,w)$ and a further application of Lemma \ref{lem:Y_t} yields $(B^0(T-S,w+\im\lambda)-(w+\im\lambda)B^0(T-S,1),-1)\in\cD_S$. This justifies the last equality in \eqref{eq:TB_Fubini} for all $w\in(0,1)$.
Considering now the case $w\in(w_-,0)$, note that, since \eqref{eq:TB_Holder} is finite, there exists a solution up to time $S$ to \eqref{eq:Riccati1}-\eqref{eq:Riccati2} starting from $(B^0(T-S,w_-)-w_-B^0(T-S,1),-1)$, but this point may lie on the boundary $\partial\cY_S$.
However, setting $\gamma:=w/w_-\in(0,1)$, again by Lemma \ref{lem:convex} we have that
\begin{align*}
B^0_I(T-S,w)-wB^0_I(T-S,1)
&= B^0_I(T-S,\gamma w_-)-\gamma w_-B^0_I(T-S,1)	\\
&\preceq \gamma\left(B^0_I(T-S,w_-)-w_-B^0_I(T-S,1)\right).
\end{align*}
In particular, since $(0,-1)\in\cY_S$ by Assumption \ref{ass:bond} and the set $\cY_S$ is convex, this implies that $(B^0(T-S,w)-wB^0(T-S,1),-1)\in\cY_S$, for all $w\in(w_-,0)$. Similarly as above, this yields $(B^0(T-S,w+\im\lambda)-(w+\im\lambda)B^0(T-S,1),-1)\in\cD_S$, thus completing the proof of \eqref{eq:TB_Fubini}.
Formula \eqref{eq:term_basis} then follows by applying Lemma \ref{lem:Fourier_payoffs} to equation \eqref{eq:TB_pricing}, making use of \eqref{eq:TB_Fubini}.
\end{proof}

\begin{rem}
It is interesting to remark that the pricing function $\Pi^{\rm TB}_t(w)$ in Theorem \ref{thm:term_basis} does only depend on the first $m$ components of the affine process $X$. This follows from the fact that $B^0_J(T-S,w+\im\lambda)-(w+\im\lambda)B^0_J(T-S,1)=0$ (see Remark \ref{rem:structure_psi} and the proof of Theorem \ref{thm:term_basis}).
\end{rem}

\begin{rem}	\label{rem:comments}
In this remark, we discuss briefly some general aspects that are relevant for the numerical implementation of the pricing formulas derived above:
\begin{enumerate}
\item When computing the quantities $\PiF_t(w)$ and $\PiB_t(w)$ that appear in Theorems \ref{thm:forward} and \ref{thm:backward}, respectively, the parts that depend on the characteristic function have to be evaluated only once for all different strikes. This fact is important for model calibration, especially in models for which the Riccati ODEs \eqref{eq:Riccati1}-\eqref{eq:Riccati2} have to be solved numerically. 
\item The computation of the quantities $\PiF_t(w)$, $\PiB_t(w)$ and $\Pi^{{\rm TB}}_t(w)$ is subject to a truncation error (due to the truncation of the integral at some suitably chosen upper/lower bounds) and a discretization error (due to approximating the integral with a finite sum). Explicit error bounds that are applicable to our setting have been derived in \cite{lee2004}.
\item The choice of the parameter $w$ depends on the specific properties of the affine process under consideration and has to be suitably chosen in order to exclude large oscillations of the integrand. The choice of $w$ can also depend on the maturity and the strike of the products to be priced (see, e.g., \cite{lee2004}). As can be seen from the proofs above, the range of possible values of $w$ is crucially related to the integrability properties of the model.
Precise recommendations on the choice of $w$, as well as of the other numerical parameters, are given in \cite{leven2016}, in an affine setup that also covers our setting. 
\end{enumerate}
\end{rem}

The pricing formulae stated in Theorems \ref{thm:forward}, \ref{thm:backward} and \ref{thm:term_basis} are based on Fourier decomposition of the payoffs. Alternative pricing formulae can be obtained by passing to the $S$-forward and $T$-forward measures, exploiting the fact that the characteristic function of the affine process $X$ under any forward measure can be explicitly determined (compare with \cite[Corollary 10.2]{fil09}). 
We illustrate this methodology in the case of forward-looking and backward-looking caplets.

\begin{prop}	\label{prop:fwd_measures}
It holds that
\begin{align}
\PicF_t(S,T,K) &= P_t(S)p^S_t(\cI^{\rm F})-K'P_t(T)p^T_t(\cI^{\rm F}),
\label{eq:fwd_caplet_proba}\\
\PicB_t(S,T,K) &= P_t(S)q^S_t(\cI^{\rm B})-K'P_t(T)q^T_t(\cI^{\rm B}),
\label{eq:bck_caplet_proba}
\end{align}
where  $\cI^{\rm F}:=(-\infty,-A^0(S,T,1)-\log(K'))$, $\cI^{\rm B}:=(\log(K')-L(S,T),+\infty)$ and
\begin{itemize}
\item $p_t^S(\ud y)$ and $p_t^T(\ud y)$ denote respectively the $\cF_t$-conditional distributions of the random variable $\langle B^0(T-S,1),X_S\rangle$ under the $S$-forward and $T$-forward measures;
\item $q_t^S(\ud y)$ and $q_t^T(\ud y)$ denote respectively the $\cF_t$-conditional distributions of the random variable $Y_T-Y_S$ under the $S$-forward and $T$-forward measures.
\end{itemize}  
\end{prop}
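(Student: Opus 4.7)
The plan is to derive both identities by standard forward-measure computations, decomposing each caplet payoff into an in-the-money indicator and a strike-weighted piece and applying the appropriate change of num\'eraire to each piece. The workhorse relations are $\EE[(B_t/B_S)Z\mid\cF_t]=P_t(S)\EE^S[Z\mid\cF_t]$ and $\EE[(B_t/B_T)Z\mid\cF_t]=P_t(T)\EE^T[Z\mid\cF_t]$, which follow from Bayes' rule applied to the densities $1/(B_SP_0(S))$ and $1/(B_TP_0(T))$. Since these densities integrate to one, $P^S$ and $P^T$ extend to probability measures on the whole of $\cF$, so the identities remain valid for $Z$ measurable with respect to any $\cF_u$, $u\leq T$, by conditioning on $\cF_S$ or $\cF_T$ as needed.

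For \eqref{eq:fwd_caplet_proba}, I would start from \eqref{eq:fwd_proof}, namely
\[
\PicF_t(S,T,K)=\EE\left[\frac{B_t}{B_S}(1-K'P_S(T))^+\bigg|\cF_t\right],
\]
and write $(1-K'P_S(T))^+=\ind_{A^{\rm F}}-K'P_S(T)\ind_{A^{\rm F}}$, where $A^{\rm F}:=\{P_S(T)<1/K'\}$. Substituting the exponential-affine expression \eqref{eq:ZCB_affine} for $P_S(T)$ shows that $A^{\rm F}$ coincides with $\{\langle B^0(T-S,1),X_S\rangle\in\cI^{\rm F}\}$. The $S$-forward change of measure converts the first summand into $P_t(S)p^S_t(\cI^{\rm F})$. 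For the second summand, the tower property combined with $\EE[B_S/B_T\mid\cF_S]=P_S(T)$ gives $\EE[(B_t/B_S)P_S(T)Z\mid\cF_t]=\EE[(B_t/B_T)Z\mid\cF_t]$ for $\cF_S$-measurable $Z$, so the $T$-forward change of measure turns it into $K'P_t(T)p^T_t(\cI^{\rm F})$, yielding \eqref{eq:fwd_caplet_proba}.

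For \eqref{eq:bck_caplet_proba}, I would work from the definition
\[
\PicB_t(S,T,K)=\EE\left[\frac{B_t}{B_T}\left(\frac{B_T}{B_S}-K'\right)^{\!+}\bigg|\cF_t\right]
\]
and split $(B_T/B_S-K')^+=(B_T/B_S)\ind_{A^{\rm B}}-K'\ind_{A^{\rm B}}$ with $A^{\rm B}:=\{B_T/B_S>K'\}$. Since $B_T/B_S=\exp(Y_T-Y_S+L(S,T))$, the event $A^{\rm B}$ coincides with $\{Y_T-Y_S\in\cI^{\rm B}\}$. Multiplying through by $B_t/B_T$, the first piece collapses to $(B_t/B_S)\ind_{A^{\rm B}}$ and contributes $P_t(S)q^S_t(\cI^{\rm B})$ under $P^S$, while the second piece contributes $K'P_t(T)q^T_t(\cI^{\rm B})$ under $P^T$. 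Combining the two summands produces the formula.

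No serious obstacle is expected: integrability of every term is ensured by Assumption \ref{ass:bond}, since $(1-K'P_S(T))^+\leq1$ and $(B_T/B_S-K')^+\leq B_T/B_S$ make all discounted expectations finite. The only mild subtlety is that $\ind_{A^{\rm B}}$ is $\cF_T$-measurable rather than $\cF_S$-measurable, but this causes no difficulty once the change of num\'eraire is interpreted on $\cF$ through the extended Bayes formula.
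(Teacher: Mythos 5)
Your route is the same as the paper's: the proof given there is precisely the two-line observation that \eqref{eq:fwd_caplet_proba} follows from \eqref{eq:fwd_proof} together with \eqref{eq:ZCB_affine}, and \eqref{eq:bck_caplet_proba} from \eqref{eq:bck_proof} together with $B_T/B_S=\exp(L(S,T)+(Y_T-Y_S))$; you simply spell out the change-of-num\'eraire computations, including the correct remark that $\EE[(B_t/B_S)Z\,|\,\cF_t]=P_t(S)\EE^S[Z\,|\,\cF_t]$ remains valid for $\cF_T$-measurable $Z$, which is exactly the point needed in the backward-looking case. The identification of the exercise events with $\cI^{\rm F}$ and $\cI^{\rm B}$ and the integrability remarks are fine.

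However, your last step for the backward-looking caplet is asserted too hastily. Your decomposition yields
\[
\PicB_t(S,T,K)=\EE\Bigl[\tfrac{B_t}{B_S}\ind_{A^{\rm B}}\Big|\cF_t\Bigr]-K'\,\EE\Bigl[\tfrac{B_t}{B_T}\ind_{A^{\rm B}}\Big|\cF_t\Bigr]
=P_t(S)q^S_t(\cI^{\rm B})-K'P_t(T)q^T_t(\cI^{\rm B}),
\]
which has the two terms in the opposite order to the right-hand side of \eqref{eq:bck_caplet_proba} as printed, so ``combining the two summands'' does not literally produce that formula. A sanity check with deterministic rates, or comparison with the structure of \eqref{eq:fwd_caplet_proba} (where $\cI^{\rm F}$ likewise describes the in-the-money event and the $P_t(S)$-term carries the plus sign), shows that the expression you derive is the correct one and that the displayed \eqref{eq:bck_caplet_proba} has the two terms transposed. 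Your argument is therefore sound and coincides with the paper's intended proof, but you should flag this sign discrepancy explicitly rather than claim agreement with the statement as written.
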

\begin{proof}
Formula \eqref{eq:fwd_caplet_proba} is a direct consequence of \eqref{eq:fwd_proof} and \eqref{eq:ZCB_affine}, while formula \eqref{eq:bck_caplet_proba} follows from \eqref{eq:bck_proof} together with the fact that $B_T/B_S=\exp(L(S,T)+(Y_T-Y_S))$.
\end{proof}

The distributions appearing in Proposition \ref{prop:fwd_measures} can be recovered from the $\cF_t$-conditional characteristic function of $(\langle B^0(T-S,1),X_S\rangle,Y_T-Y_S)$ under the $S$-forward and $T$-forward measures. In view of Proposition \ref{prop:ext_Fourier}, the latter can be computed as follows, for all $(\zeta_1,\zeta_2)\in\R^2$:
\begin{align*}
& \EE^S\bigl[e^{\im\zeta_1\langle B^0(T-S,1),X_S\rangle+\im\zeta_2(Y_T-Y_S)}\big|\cF_t\bigr]	\\
&= \frac{1}{P_t(S)}\EE\bigl[e^{-L(t,S)-(Y_S-Y_t)+\im\zeta_1\langle B^0(T-S,1),X_S\rangle+\im\zeta_2(Y_T-Y_S)}\big|\cF_t\bigr]	\\
&= \frac{1}{P_t(S)}\EE\bigl[e^{-L(t,S)-\im\zeta_2L(S,T)+A^0(S,T,-\im\zeta_2)-(Y_S-Y_t)+\langle \im\zeta_1B^0(T-S,1)+B^0(T-S,-\im\zeta_2),X_S\rangle}\big|\cF_t\bigr]	\\
&= \frac{1}{P_t(S)}e^{-\im\zeta_2L(S,T)+A^0(S,T,-\im\zeta_2)+A^1(t,S,\im\zeta_1B^0(T-S,1)+B^0(T-S,-\im\zeta_2))+\langle B^1(S-t,\im\zeta_1B^0(T-S,1)+B^0(T-S,-\im\zeta_2)),X_t\rangle}
\end{align*}
and
\begin{align*}
& \EE^T\bigl[e^{\im\zeta_1\langle B^0(T-S,1),X_S\rangle+\im\zeta_2(Y_T-Y_S)}\big|\cF_t\bigr]	\\
&= \frac{1}{P_t(T)}\EE\bigl[e^{-L(t,T)-(Y_S-Y_t)+\im\zeta_1\langle B^0(T-S,1),X_S\rangle+(\im\zeta_2-1)(Y_T-Y_S)}\big|\cF_t\bigr]	\\
&= \frac{1}{P_t(T)}\EE\bigl[e^{-L(t,S)-\im\zeta_2L(S,T)+A^0(S,T,1-\im\zeta_2)-(Y_S-Y_t)+\langle \im\zeta_1B^0(T-S,1)+B^0(T-S,1-\im\zeta_2),X_S\rangle}\big|\cF_t\bigr]	\\
&= \frac{1}{P_t(T)}e^{-\im\zeta_2L(S,T)+A^0(S,T,1-\im\zeta_2)+A^1(t,S,\im\zeta_1B^0(T-S,1)+B^0(T-S,1-\im\zeta_2))+\langle B^1(S-t,\im\zeta_1B^0(T-S,1)+B^0(T-S,1-\im\zeta_2)),X_t\rangle}.
\end{align*}

\begin{rem}	\label{rem:fwd_measures}
For some specific models, the conditional distributions appearing in Proposition \ref{prop:fwd_measures} can be explicitly computed. In particular, this is the case for (multi-factor) Hull-White models, which have the property of preserving the Gaussian distribution of $X_S$ and $Y_T-Y_S$ under any forward measure. In this setting, one can deduce from Proposition \ref{prop:fwd_measures} the caplet pricing formulas recently derived in \cite{Hasegawa21,Hof20,RB21,Turf21,Xu22}.
\end{rem}

\section{An example: CIR++ model for an RFR process}
\label{sec:example}

In this section, we illustrate the applicability of Proposition \ref{prop:fwd_measures} in the context of the CIR++ model introduced in \cite{BM01}. 
We assume that the RFR process is given by $r:=\ell(\cdot)+X$, where $X=(X_t)_{t\geq0}$ is a square-root process:
\be	\label{eq:CIR}
\ud X_t = (b - \beta X_t)\ud t + \sigma\sqrt{X_t}\ud W_t,
\qquad X_0>0,
\ee
where $b,\beta,\sigma>0$. The explicit expression of the unique function $\ell$ that fits the term structure at $t=0$ is given in \cite[Section 6]{BM01}.
For the CIR++ model, the explicit form of the functions $A^0(t,T,v)$ and $B^0(T-t,v)$ can be deduced from \cite[Corollary 6.3.4.2]{JYC}:
\begin{align*}
A^0(t,T,v) &= \frac{2b}{\sigma^2}\log\left(\frac{2\theta_v e^{\frac{(\theta_v+\beta)(T-t)}{2}}}{2\theta_v+(\beta+\theta_v)(e^{\theta_v(T-t)}-1)}\right) - v\int_t^T\ell(u)\ud u,\\
B^0(T-t,v) &= \frac{-2v}{\beta+\theta_v\coth(\frac{\theta_v(T-t)}{2})},
\end{align*}
with $\theta_v:=\sqrt{\beta^2+2v\sigma^2}$. Setting $v=1$ enables us to explicitly compute ZCB prices by \eqref{eq:ZCB_affine}.
\cite[Proposition 6.3.4.1]{JYC} implies that the functions $A^1(t,T,u)$ and $B^1(T-t,u)$ are given by
\begin{align*}
A^1(t,T,u) &= \frac{2b}{\sigma^2}\log\left(\frac{2\theta e^{\frac{(\theta+\beta)(T-t)}{2}}}{\theta(e^{\theta(T-t)}+1)+\beta(e^{\theta(T-t)}-1)-u\sigma^2(e^{\theta(T-t)}-1)}\right) 
-\int_t^T\ell(u)\ud u,\\
B^1(T-t,u) &= \frac{u(\theta+\beta+e^{\theta(T-t)}(\theta-\beta))-2(e^{\theta(T-t)}-1)}{\theta(e^{\theta(T-t)}+1)+\beta(e^{\theta(T-t)}-1)-u\sigma^2(e^{\theta(T-t)}-1)},
\end{align*}
where we denote $\theta:=\theta_1$ for brevity of notation.

In the CIR++ model, the price of a forward-looking caplet can be computed in closed form, since the $\cF_t$-conditional distributions $p^S_t(\ud y)$ and $p^T_t(\ud y)$ appearing in formula \eqref{eq:fwd_caplet_proba} can be explicitly determined. Indeed, for all $0\leq t\leq S\leq T<+\infty$, the $\cF_t$-conditional density $p^T_{(t,S)}(x)$ of $X_S$ under the $T$-forward measure is given by (see, e.g., \cite[Section 6]{BM01})
\[
p^T_{(t,S)}(x) = f_{\chi^2(\nu,\delta(t,S)X_t)}(x),
\]
where $f_{\chi^2(\nu,\delta(t,S)X_t)}$ denotes the density function of a non-central $\chi^2$ distribution with $\nu$ degrees of freedom and non-centrality parameter $\delta(t,S)X_t$, with $\nu:=4b/\sigma^2$ and
\[
\delta(t,S) := \frac{4\rho^2(S-t)e^{\theta(S-t)}}{4(\rho(S-t)+(\beta+\theta)/\sigma^2-B^0(T-S,1))^2},
\quad\text{ where }\quad
\rho(S-t) := \frac{2\theta}{\sigma^2(e^{\theta(S-t)}-1)}.
\]

The price of a backward-looking caplet can be computed by relying on formula \eqref{eq:bck_caplet_proba}. For a square-root process $X$, the $\cF_t$-conditional distribution of $\int_S^TX_u\ud u$ under a forward measure is not known. However, it can be retrieved by the Gil-Pelaez  inversion formula (see \cite{GP51}), making use of the explicit knowledge of the $\cF_t$-conditional characteristic function of $\int_S^TX_u\ud u$ under any forward measure. This leads to the following semi-closed pricing formula:
\[
\PicB_t(S,T,K)
= \frac{P_t(S)-K'P_t(T)}{2}
+\frac{1}{\pi}\int_0^{+\infty}\frac{\imag\left((K')^{-\im x}(e^{g_1(x)}-K'e^{g_2(x)})\right)}{x}\ud x,
\]
where
\begin{align*}
g_1(x) &:= A^0(S,T,-\im x)+A^1(t,S,B^0(T-S,-\im x))+\langle B^1(S-t,B^0(T-S,-\im x)),X_t\rangle,\\
g_2(x) &:= A^0(S,T,1-\im x)+A^1(t,S,B^0(T-S,1-\im x))+\langle B^1(S-t,B^0(T-S,1-\im x)),X_t\rangle.
\end{align*}

\begin{rem}	\label{rem:Wishart}
The CIR++ model of this section can be generalized to a Wishart driving process, as considered in \cite{Gno12}. By \cite[Lemma 5.3]{CFG19}, a Wishart process has a non-central Wishart distribution with known parameters under any forward measure. Similarly as above, this allows for the explicit computation of the conditional probabilities appearing in formula \eqref{eq:fwd_caplet_proba}, while the conditional probabilities appearing in formula \eqref{eq:bck_caplet_proba} can be recovered by Fourier inversion.
\end{rem}
\vspace{-0.3cm}

\section{Pricing of futures contracts}	\label{sec:futures}

In this section, we show that 1-month (1M) and 3-month (3M) futures contracts can be efficiently priced in the context of the affine RFR framework introduced in Section \ref{sec:model}. This is especially important in view of calibration of a model given that futures contracts are currently the most liquid RFR-based products. 
As explained in \cite{Merc18,LM19}, 1M and 3M futures contracts are characterized by different settlement specifications. 
A 3M futures contract settles at $T$ at the backward-looking rate $R(S,T)$ (representing the geometric average of overnight rates over the period $[S,T]$, with $T-S$ being equal to three months). In this case, denoting by $f^{{\rm 3M}}(t,S,T)$ the 3M futures rates at time $t$, it holds that
\be	\label{eq:future_3M}
f^{{\rm 3M}}(t,S,T) 
= \EE[R(S,T)|\cF_t]
= \frac{1}{T-S}\left(e^{L(S,T)}\EE[e^{Y_T-Y_S}|\cF_t]-1\right),
\ee
for $0\leq t\leq S<T<+\infty$. As long as $(0,1)\in\cY_T$, the  conditional expectation in the right-hand side of \eqref{eq:future_3M} can be explicitly computed by a direct application of Proposition \ref{prop:ext_Fourier}.

\begin{rem}	\label{rem:futures_options}
Options on 3M SOFR futures are nowadays traded in the market. In our setup, the 3M futures rate $f^{{\rm 3M}}(t,S,T)$ admits an explicit representation as an exponentially affine function of $X_t$, as a consequence of \eqref{eq:future_3M}. Therefore, by applying the same arguments adopted in the proof of Theorem \ref{thm:forward}, one can derive a general pricing formula for options on 3M futures. In the context of a Gaussian HJM model (i.e., with deterministic volatility function), a valuation formula for options on 3M futures has been recently stated in \cite{Hen22}.
\end{rem}

For a 1M futures contract, the settlement is made at the rate $\log(B_T/B_S)/(T-S)$ (representing the arithmetic average of overnight rates over the period $[S,T]$, with $T-S$ being equal to one month). Denoting by $f^{{\rm 1M}}(t,S,T)$ the 1M futures rates at time $t$, it holds that
\be	\label{eq:future_1M}
f^{{\rm 1M}}(t,S,T) = \frac{1}{T-S}\,\EE\left[\int_S^Tr_u\ud u\bigg|\cF_t\right]
= \frac{1}{T-S}\bigl(L(S,T)+\EE[Y_T-Y_S|\cF_t]\bigr),
\ee
provided that the expectation is finite. In the statement of the following result, we denote by $\mu_i$, $i=0,1,\ldots,m$, the L\'evy measures appearing in \eqref{eq:LevyKhint} and by $\Phi_u$ and $\Phi_v$ the partial derivatives of  $\Phi$ with respect to its second and third argument, respectively, and similarly for $\Psi_u$ and $\Psi_v$.

\begin{prop}	\label{prop:futures_1M}
Suppose that the following condition holds:
\be	\label{eq:moment_condition}
\int_{D\setminus\{0\}}|\xi|\mu_i(\ud\xi)<+\infty,
\qquad\text{ for all }i=0,1,\ldots,m.
\ee
Then, for all $0\leq t\leq S<T<+\infty$, the 1M futures rate $f^{{\rm 1M}}(t,S,T)$ is given by
\be	\label{eq:futures_1M_expl}	\begin{aligned}
f^{{\rm 1M}}(t,S,T) &= \frac{L(S,T)+\Phi_v(T-t,0,0)-\Phi_v(S-t,0,0)}{T-S} \\
&\quad +\frac{\langle \Psi_v(T-t,0,0)-\Psi_v(S-t,0,0),X_t\rangle}{T-S}.
\end{aligned}	\ee
\end{prop}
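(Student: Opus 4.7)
The plan is to start from \eqref{eq:future_1M}, which reduces the statement to computing $\EE[Y_T - Y_S \mid \cF_t]$. Writing $Y_T - Y_S = (Y_T - Y_t) - (Y_S - Y_t)$, it suffices to obtain a closed-form expression for $\EE[Y_T - Y_t \mid \cF_t]$ in terms of $X_t$ and then apply the analogous formula with $T$ replaced by $S$ before subtracting.

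First I would establish the integrability $\EE|Y_T - Y_t| < +\infty$, so that the conditional expectations above are well-defined. Under the moment condition \eqref{eq:moment_condition}, it is a classical fact from the affine process literature that $\EE|X_u| < +\infty$ for every $u \geq 0$: the first moments of the affine process satisfy a linear ODE whose coefficients involve precisely the quantities $\int_{D\setminus\{0\}} |\xi|\,\mu_i(\ud\xi)$ together with the drift parameters $\beta_i$ from \eqref{eq:LevyKhint}. Fubini's theorem then yields $\EE|Y_T - Y_t| \leq |\Lambda|\int_t^T \EE|X_u|\,\ud u < +\infty$.

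The main step would be to apply Proposition \ref{prop:extended_aff} with $u = 0$ and $v = \im\xi$ for $\xi \in \R$, giving
\[
\EE\bigl[e^{\im\xi(Y_T - Y_t)} \bigm| \cF_t\bigr]
= e^{\Phi(T-t, 0, \im\xi) + \langle \Psi(T-t, 0, \im\xi), X_t\rangle},
\]
where I also rely on $\Phi(t, 0, 0) = 0$ and $\Psi(t, 0, 0) = 0$ for every $t \geq 0$: indeed $F(0) = 0 = R(0)$ by inspection of \eqref{eq:LevyKhint}, so the Riccati system \eqref{eq:Riccati1}--\eqref{eq:Riccati2} with $v = 0$ admits the trivial solution. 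Differentiating both sides in $\xi$ at $\xi = 0$: on the left, the integrability from the previous paragraph justifies dominated convergence and produces $\im\,\EE[Y_T - Y_t \mid \cF_t]$; on the right, the condition \eqref{eq:moment_condition} is exactly what allows differentiation under the integral sign in the L\'evy--Khintchine representation \eqref{eq:LevyKhint} at the origin, so that $F$ and $R$ are of class $C^1$ near $0$ and the derivatives $\Phi_v(T-t, 0, 0)$, $\Psi_v(T-t, 0, 0)$ exist and can be read off the linearization of \eqref{eq:Riccati1}--\eqref{eq:Riccati2} around the trivial solution. Matching the imaginary parts yields $\EE[Y_T - Y_t \mid \cF_t] = \Phi_v(T-t, 0, 0) + \langle \Psi_v(T-t, 0, 0), X_t\rangle$.

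The same argument with $T$ replaced by $S$ produces the analogous identity for $\EE[Y_S - Y_t \mid \cF_t]$; subtracting the two and inserting into \eqref{eq:future_1M} delivers \eqref{eq:futures_1M_expl}. The main obstacle is the joint justification of differentiating the affine transform formula at $v = 0$ along the imaginary axis: on the stochastic side this is a dominated convergence argument powered by $\EE|Y_T - Y_t| < +\infty$, while on the deterministic side one needs smoothness of the Riccati vector field at the origin. The moment condition \eqref{eq:moment_condition} is precisely what secures both simultaneously, which is why it is the only additional hypothesis in the statement.
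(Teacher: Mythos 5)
Your proof is correct, and it reaches \eqref{eq:futures_1M_expl} by a mildly different route than the paper. The paper computes the $\cF_t$-conditional characteristic function of $Y_T-Y_S$ by first conditioning on $\cF_S$, which produces the nested expression $\Phi(T-S,0,\im\zeta)+\Phi(S-t,\Psi(T-S,0,\im\zeta),0)+\langle\Psi(S-t,\Psi(T-S,0,\im\zeta),0),X_t\rangle$; differentiating at $\zeta=0$ then gives the conditional mean in terms of $\Phi_v(T-S,0,0)$, $\Phi_u(S-t,0,0)$, $\Psi_u(S-t,0,0)$, $\Psi_v(T-S,0,0)$, and the semiflow relations \eqref{eq:semiflow} are needed to recombine these into the differences $\Phi_v(T-t,0,0)-\Phi_v(S-t,0,0)$ and $\Psi_v(T-t,0,0)-\Psi_v(S-t,0,0)$ appearing in the statement. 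You instead split $Y_T-Y_S=(Y_T-Y_t)-(Y_S-Y_t)$, apply the affine transform of Proposition \ref{prop:extended_aff} at the single time $t$ with $(u,v)=(0,\im\xi)$ to each piece, and differentiate at $\xi=0$ (using $\Phi(\cdot,0,0)=0$, $\Psi(\cdot,0,0)=0$ from $F(0)=R(0)=0$); this lands directly on the stated formula and makes the semiflow step unnecessary. The supporting ingredients are the same in both arguments and you supply them adequately: integrability $\EE|Y_T-Y_t|<+\infty$ under \eqref{eq:moment_condition} (the paper cites a reference for $\EE[|Y_T|]<+\infty$, you argue via the linear ODE for first moments plus Fubini, which is fine), dominated convergence on the probabilistic side, and differentiability of $\Phi,\Psi$ in $v$ at the origin, which you justify by $C^1$-regularity of $F$ and $R$ near $0$ under the first-moment condition together with smooth dependence of ODE solutions on parameters (the paper cites the corresponding lemmata of Duffie--Filipovi\'c--Schachermayer for exactly this). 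Your version is slightly more self-contained at the level of the final algebra, while the paper's version localizes all the work in the single characteristic-function computation and delegates the bookkeeping to the semiflow identities; either is acceptable.
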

\begin{proof}
Condition \ref{eq:moment_condition} implies that $\EE[|Y_T|]<+\infty$, for all $T>0$ (see \cite[Proposition 5.1]{FJR20}).
To compute the conditional expectation in the right-hand side of \eqref{eq:future_1M} we first compute the $\cF_t$-conditional characteristic function of $Y_T-Y_S$:
\begin{align*}
\EE\bigl[e^{\im\zeta(Y_T-Y_S)}\big|\cF_t\bigr]	
&= \EE\bigl[e^{\Phi(T-S,0,\im\zeta)+\langle\Psi(T-S,0,\im\zeta),X_S\rangle}\big|\cF_t\bigr]	\\
&= e^{\Phi(T-S,0,\im\zeta)+\Phi(S-t,\Psi(T-S,0,\im\zeta),0)+\langle\Psi(S-t,\Psi(T-S,0,\im\zeta),0),X_t\rangle},
\quad\text{ for }\zeta\in\R.
\end{align*}
Therefore, we have that
\begin{align*}
&\EE[Y_T-Y_S|\cF_t]
= -\im\frac{\ud}{\ud\zeta}\EE\bigl[e^{\im\zeta(Y_T-Y_S)}\big|\cF_t\bigr]\Big|_{\zeta=0}	\\
&= \Phi_v(T-S,0,0)+\Phi_u(S-t,0,0)\Psi_v(T-S,0,0)+\langle \Psi_u(S-t,0,0)\Psi_v(T-S,0,0),X_t\rangle,
\end{align*}
where the differentiability of the functions $\Phi$ and $\Psi$ is ensured by condition \eqref{eq:moment_condition} (see \cite[Lemmata 5.3 and 6.5]{dfs03}). Formula \eqref{eq:futures_1M_expl} then follows by relying on the semiflow relations \eqref{eq:semiflow}.
\end{proof}

Formula \eqref{eq:futures_1M_expl} requires the knowledge of the solution $(\Phi,\Psi)$ to the Riccati system \eqref{eq:Riccati1}-\eqref{eq:Riccati2}. There exists however an alternative procedure, which only relies on the functional characteristics $(F,R)$, explicitly given in \eqref{eq:LevyKhint}. The next proposition follows directly from \cite[Lemma 5.2]{FJR20}.

\begin{prop}	\label{prop:futures_1M_alt}
Suppose that condition \eqref{eq:moment_condition} holds. Then, for all $0\leq t\leq S<T<+\infty$, the 1M futures rate $f^{{\rm 1M}}(t,S,T)$ is given by
\be	\label{eq:futures_1M_expl_alt}
f^{{\rm 1M}}(t,S,T)
= \frac{1}{T-S}\left(L(S,T)+\int_S^T\bigl\langle\Lambda,\EE[X_u|\cF_t]\bigr\rangle\ud u\right),
\ee
where
\[
\EE[X_u|\cF_t] = e^{(u-t)\cA}X_t + \int_0^{u-t}e^{sA}\bar{b}\,\ud s,
\]
with the matrix $\cA\in\R^{d\times d}$ and the vector $\bar{b}\in\R^d$ given by
\[
\cA_{ij} := \partial_{u_i}R_j(u)\big|_{u=0}
\quad\text{and}\quad
\bar{b}_i := \partial_{u_i}F(u)\big|_{u=0},
\qquad\text{ for all }i,j=1,\ldots,m.
\]
\end{prop}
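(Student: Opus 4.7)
The plan is to start from the representation \eqref{eq:future_1M} of the 1M futures rate, in which the key quantity is the conditional expectation $\EE[Y_T-Y_S|\cF_t]$. Using the definition $Y_u=\int_0^u\langle\Lambda,X_s\rangle\ud s$, I would rewrite $Y_T-Y_S=\int_S^T\langle\Lambda,X_u\rangle\ud u$ and invoke Fubini's theorem to commute the time integral with the conditional expectation, obtaining
\[
\EE[Y_T-Y_S|\cF_t] = \int_S^T\bigl\langle\Lambda,\EE[X_u|\cF_t]\bigr\rangle\ud u.
\]
The legitimacy of Fubini is ensured by the moment condition \eqref{eq:moment_condition}, which via \cite[Proposition 5.1]{FJR20} implies that $\EE[|X_u|]$ is finite and locally bounded in $u$. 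Substituting this identity back into \eqref{eq:future_1M} produces the structure of the right-hand side of \eqref{eq:futures_1M_expl_alt}.

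The second step is to obtain the closed-form expression for $\EE[X_u|\cF_t]$. Since $X$ is affine, differentiating the affine transform formula \eqref{eq:conditional_Fourier} with respect to the components of $u$ at $u=0$ and using the Markov property yields a linear inhomogeneous ODE for the first-moment function $s\mapsto \EE[X_{t+s}|\cF_t]$, with drift vector and matrix obtained from the first-order Taylor coefficients of $F$ and $R$ at the origin. Concretely, one verifies that
\[
\frac{\ud}{\ud s}\EE[X_{t+s}|\cF_t] = \cA\,\EE[X_{t+s}|\cF_t]+\bar{b}, \qquad \EE[X_t|\cF_t]=X_t,
\]
with $\cA$ and $\bar{b}$ as defined in the statement. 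Variation of constants then gives the exponential-matrix representation $\EE[X_u|\cF_t]=e^{(u-t)\cA}X_t+\int_0^{u-t}e^{s\cA}\bar b\,\ud s$. This is precisely the content of \cite[Lemma 5.2]{FJR20}, which I would cite rather than re-derive; the differentiability of $F$ and $R$ at the origin underlying this step is guaranteed by \eqref{eq:moment_condition} in view of \cite[Lemmata 5.3 and 6.5]{dfs03}.

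The only technical points requiring care are the justification of Fubini and the interchange of differentiation and expectation in extracting the moment ODE. Both are standard consequences of the moment analysis of affine processes under \eqref{eq:moment_condition}, so I do not expect any substantial obstacle beyond quoting the cited lemma and combining it with the elementary manipulation of \eqref{eq:future_1M} described above.
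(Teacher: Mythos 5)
Your proposal is correct and follows essentially the same route as the paper, which simply invokes \cite[Lemma 5.2]{FJR20} (together with the representation \eqref{eq:future_1M} and the moment condition \eqref{eq:moment_condition}) for the first-moment formula $\EE[X_u|\cF_t]$; your added details on Fubini and the linear moment ODE are just the standard content behind that citation.
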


In the specific case of Gaussian affine diffusive models, the futures pricing formula stated in Proposition \ref{prop:futures_1M_alt} reduces to the explicit formula utilized in the recent work \cite{SS21}.

\appendix
\section{Fourier decomposition of payoff functions}

In this appendix, we recall the following well-known result (see, e.g., \cite[Lemma 10.2]{fil09}).

\begin{lem}	\label{lem:Fourier_payoffs}
Let $k>0$. For any $x\in\R_+$ the following holds:
\[
\frac{1}{2\pi}\int_{\R}x^{w+\im\lambda}\frac{k^{-(w-1+\im\lambda)}}{(w+\im\lambda)(w-1+\im\lambda)}\ud\lambda
= \begin{cases}
(k-x)^+,&\text{ if }w<0,\\
(x-k)^+-x = (k-x)^+-k,&\text{ if }0<w<1,\\
(x-k)^+,&\text{ if }w>1.
\end{cases}
\]
\end{lem}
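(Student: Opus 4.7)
The plan is to evaluate the integral by recognizing it as a Bromwich-type contour integral and applying the residue theorem. The key preliminary step is a change of variables: substituting $z = w+\im\lambda$ (so $\ud z = \im\,\ud\lambda$) recasts the integrand as
\[
\frac{1}{2\pi}\int_{\R}x^{w+\im\lambda}\frac{k^{-(w-1+\im\lambda)}}{(w+\im\lambda)(w-1+\im\lambda)}\ud\lambda
=\frac{k}{2\pi\im}\int_{w-\im\infty}^{w+\im\infty}\frac{(x/k)^z}{z(z-1)}\,\ud z,
\]
which is a Mellin inversion integral along the vertical line $\real(z)=w$ for the meromorphic function $(x/k)^z/[z(z-1)]$, whose only singularities are simple poles at $z=0$ and $z=1$ with residues $-1$ and $x/k$, respectively.

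The second step is to close the contour by a large semicircle. When $x<k$ the factor $(x/k)^z$ decays as $\real(z)\to+\infty$, so one closes to the right (clockwise orientation); when $x>k$ one closes to the left (counterclockwise). The estimate on the closing arc is where one must be slightly careful: on a semicircle of radius $R$ the integrand is bounded by $(x/k)^{\real(z)}/|z(z-1)|$, and since $(x/k)^{\real(z)}$ is monotone in $\real(z)$ the bound is uniform in the relevant half-plane, while the $1/|z(z-1)|$ factor gives $O(1/R^2)$, so the arc integral vanishes as $R\to\infty$ by a direct ML-type estimate (no Jordan lemma needed).

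The third step is to read off the result case by case according to the position of the two poles relative to the line $\real(z)=w$. For $w<0$ both poles lie to the right, so closing to the right when $x<k$ gives $-k(-1+x/k)=k-x$, while closing to the left when $x>k$ encloses no pole and gives $0$, matching $(k-x)^+$. For $0<w<1$ only $z=1$ is to the right and only $z=0$ to the left, producing $-x$ in the $x<k$ case and $-k$ in the $x>k$ case, which coincides with $(x-k)^+-x=(k-x)^+-k$. For $w>1$ both poles lie to the left, and the symmetric computation yields $(x-k)^+$.

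The only genuinely delicate point is justifying the closing-arc argument in a way that is uniform enough to accommodate all values of $w$ in the stated ranges and both regimes $x<k$ and $x>k$; once that estimate is in hand the rest is a bookkeeping of residues and signs of orientation. The boundary cases $x=k$ are handled by a principal-value argument or, more simply, by continuity in $x$ of both sides. Since the three cases for $w$ are treated in a completely parallel fashion, I would present the $w<0$ case in detail and indicate the sign changes for the other two.
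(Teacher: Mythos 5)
Your proof is correct. The change of variables turning the integral into $\frac{k}{2\pi\im}\int_{w-\im\infty}^{w+\im\infty}\frac{(x/k)^z}{z(z-1)}\,\ud z$, the residues $-1$ at $z=0$ and $x/k$ at $z=1$, the choice of closing direction according to whether $x<k$ or $x>k$, and the ML estimate on the arc (which indeed needs no Jordan lemma, since the rational factor already gives $O(R^{-2})$ decay while $(x/k)^{\real(z)}$ is bounded by $(x/k)^{w}$ on the chosen half-plane) all check out, and the six residue bookkeepings reproduce the three displayed cases with the right signs for the clockwise/counterclockwise orientations. Note that the paper itself gives no proof but cites \cite[Lemma 10.2]{fil09}, where the identity is obtained in the opposite direction: one computes the damped (two-sided Laplace) transform of the payoff, written as a function of $y=\log x$, explicitly on the strips of convergence $w<0$, $0<w<1$, $w>1$, and then invokes the Fourier inversion theorem; the three ranges of $w$ are exactly the three strips on which the transform exists, and no contour-closing estimate is ever needed. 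Your residue evaluation of the resulting Bromwich--Mellin integral is a genuinely different, equally standard and self-contained route, and it has the merit of making transparent why the correction terms $-x$ and $-k$ appear in the middle range (only one of the two poles is picked up). Two small remarks: your argument implicitly assumes $x>0$ in order to write $(x/k)^z$, which is harmless since the lemma is only applied in the paper with strictly positive $x$; and the boundary case $x=k$ does not actually require a separate continuity or principal-value argument, because when $x/k=1$ the $O(R^{-2})$ decay of $1/(z(z-1))$ alone makes the arc contribution vanish, so either closure applies and yields the stated value directly.
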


\bibliographystyle{alpha}
\bibliography{biblio_RFR}

\end{document}